\def\@parse@version#1{\@parse@version@0#1}
\def\@parse@version@#1/#2/#3#4#5\@nil{%
\@parse@version@dash#1-#2-#3#4\@nil}
\def\@parse@version@dash#1-#2-#3#4#5\@nil{%
\if\relax#2\relax\else#1\fi#2#3#4 }
\newcommand{\bfb}{{\bf b}}
\newcommand{\bfn}{{\bf n}}
\newcommand{\bfr}{{\bf r}}
\newcommand{\bfv}{{\bf v}}
\newcommand{\bfy}{{\bf y}}
\newcommand{\bfR}{{\bf R}}
\newcommand{\beq}{\begin{equation}}
\newcommand{\eeq}{\end{equation}}
\newcommand{\beqs}{\begin{eqnarray}}
\newcommand{\eeqs}{\end{eqnarray}}
\newtheorem{lemma}{Lemma}[section]
\newtheorem{theorem*}{Theorem}
\begin{document}

\begin{frontmatter}



\title{Inverse design of deployable origami structures that approximate a general surface}


\author[1]{Xiangxin Dang}
\author[2]{Fan Feng}
\author[3]{Paul Plucinsky}
\author[4]{Richard D. James}
\author[1]{Huiling Duan}
\author[1]{Jianxiang Wang\corref{mycorrespondingauthor}}
\cortext[mycorrespondingauthor]{Corresponding author}
\ead{jxwang@pku.edu.cn}

\address[1]{State Key Laboratory for Turbulence and Complex Systems, Department of Mechanics and Engineering Science, College of Engineering, Peking University, Beijing 100871, China}
\address[2]{Cavendish Laboratory, University of Cambridge, Cambridge CB3 0HE, UK}
\address[3]{Aerospace and Mechanical Engineering, University of Southern California, Los Angeles, California 90089, USA}
\address[4]{Aerospace Engineering and Mechanics, University of Minnesota, Minneapolis, MN 55455, USA}

\begin{abstract}
Shape-morphing finds widespread utility, from the deployment of small stents and large solar sails to actuation and propulsion in soft robotics.  Origami structures provide a template for shape-morphing, but rules for designing and folding the structures are challenging to integrate into broad and versatile design tools. Here, we develop a sequential two-stage optimization framework to approximate a general surface by  a deployable origami structure. The optimization is performed over the space of all possible rigidly and flat-foldable quadrilateral mesh origami. So, the origami structures produced by our framework come with desirable engineering properties: they can be easily manufactured on a flat reference sheet, deployed to their target state by a controlled folding motion, then to a compact folded state in applications involving storage and portability. The attainable surfaces demonstrated include those with modest but diverse curvatures and unprecedented ones with sharp ridges.  The framework provides not only a tool to design various deployable and retractable surfaces in engineering and architecture, but also a route to optimizing other properties and functionality.
\end{abstract}

\begin{keyword}
origami \sep inverse design \sep optimization \sep deployability


\end{keyword}

\end{frontmatter}


\section{Introduction}

Origami is the art of paper folding, long appreciated for its aesthetic quality \cite{lang2011origami}. Interest in the sciences and engineering has followed \cite{huffman1976curvature, miura1985method, kawasaki1991relation, hull1994mathematics, tachi2009generalization, filipov2015origami, CALLENS2018241, li2019Origami, gu2020Origami}. Origami is now seen as a tool for large and coordinated shape-morphing increasingly sought in many applications.  With the right folding patterns, one can achieve rapid deployment across scales from medical stents to reconfigurable antennas and solar sails \cite{kuribayashi2006self, zirbel2013accommodating, pellegrino2014deployable}. Origami is also useful as a mechanism for robotic motion \cite{felton2014method, kim2018printing} or as a way to assemble complex surfaces in manufacturing \cite{rogers2016origami}. Yet, despite this promise, inverse design in origami --- the process of arranging rigid panels and straight-line creases into a pattern that can be folded to achieve desired configurations in space --- is hindered by delicate and nonlinear constraints. So it is challenging to develop broad design principles for origami structures that balance considerations of practical engineering and inverse design. Approaches based on symmetry and offshoots thereof \cite{Gattas2013Miura, Sareh2015Designofisomorphic, Sareh2015Designofnon, Hu2019Design, feng2020helical, mcinerney2020hidden} are ideal for manufacturability and foldability,  but generally lack the versatility needed for inverse design.  Approaches based on space-filling algorithms and fine-scale fold operations \cite{lang1996computational, demaine2017origamizer}  are adept at handling the inverse problem, but typically yield crease patterns that are difficult to fold by actuation or mechanical control systems common to practical engineering.  And approaches based on fixing a crease pattern topology \cite{tachi2010freeform, Levi2016Programming, pratapa2019geometric, dieleman2020jigsaw, dudte2020additive, Hu2020Rigid, hayakawa2020Form}, while attempting to strike this balance,  typically only perform well for one of the following criteria or the other:
\begin{itemize}[leftmargin=*]
\item \textit{Controlled deployability.}~A framework for producing designs that exhibit coordinated shape-morphing, i.e., the ability to deploy as a mechanism by a controlled folding motion from one state to another. This functional property is needed in many applications, yet far from guaranteed in origami design. 
\item \textit{Versatility for inverse design.} A framework that is general enough to address a broad range of inverse design problems.
\end{itemize}
With this work, we seek a novel design framework for origami structures that both guarantees controlled deployability and is versatile for inverse design.

Among various types of origami structures constructed with polygonal-mesh crease patterns, rigidly and flat-foldable quadrilateral mesh origami (RFFQM) is one special class with two fundamental properties. First,  the origami can be initially designed and manufactured on a flat reference domain, deployed to its  target state, then finally to a compact folded flat state, all without any stretching or bending of the panels during the entire process. In addition, the compact folded state can be unfolded to the target state in application involving storage and portability. Second, the folding kinematics have only one  degree-of-freedom (DOF); all the folding angles vary in a coordinated manner during the folding process. This feature simplifies the design of a mechanical control system or actuation strategy.  { Given these properties, we consider RFFQM to be a promising template for the design of \textit{deployable structures}, and develop our inverse design strategy using this class of origami}.

The canonical example of RFFQM is the famed Miura-Ori. This origami often serves as a  paradigm to demonstrate the efficacy and utility of folding strategies \cite{miura1985method, schenk2013geometry, wei2013geometric, silverberg2014using, na2015programming}, yet it is just a singular example in a much larger design space.   RFFQM are now thoroughly characterized~\cite{tachi2009generalization, lang2018rigidly, feng2020designs}. Our previous work \cite{feng2020designs} employed the concept of rank-one compatibility \cite{ball1989fine,bhattacharya2003microstructure,song2013enhanced} to derive an explicit marching algorithm for the designs and deformations of all possible RFFQM.  Here, we demonstrate that this marching algorithm can be  an effective ingredient to inverse design when supplemented with a careful optimization procedure.     Importantly, to the success of this strategy, the configuration space of all RFFQM is quite broad{: While research on RFFQM has often focused on symmetric Miura-Ori like patterns \cite{Gattas2013Miura, Sareh2015Designofisomorphic, Sareh2015Designofnon, Hu2019Design}, which exhibit simple (planar/cylindrical) modes of deformation on folding,  generic RFFQM patterns can have significant spatial variations in their crease design, thus enriching their possible deformation modes. We can therefore explore the configuration space of RFFQM systematically using the marching algorithm with the goal of approximating a variety of surfaces.}

To this end, we develop a general, efficient and widely applicable inverse design framework  to achieve a targeted surface by controlled deployment of a RFFQM crease pattern.
The design process is composed of two progressive optimization steps to pursue the best approximation of the targeted surface while strictly guaranteeing the non-linear constraints induced from rigid and flat foldability. By optimizing the input parameters of the marching algorithm described above, we minimize the difference between the shape of an origami structure produced by RFFQM and the targeted surface to achieve the optimal design.
The article is arranged as follows. We first recall the marching algorithm that parameterizes all possible RFFQM. Then we present the schematic of our inverse design method.  As illustrations of the approach, diverse examples of surfaces with varying curvatures and even sharp ridges are presented.   We also highlight the versatility of our optimization framework by extending it to design more general quad-mesh origami that accurately approximates a human face.  

\begin{figure}[t!]
\includegraphics[width=0.9\linewidth]{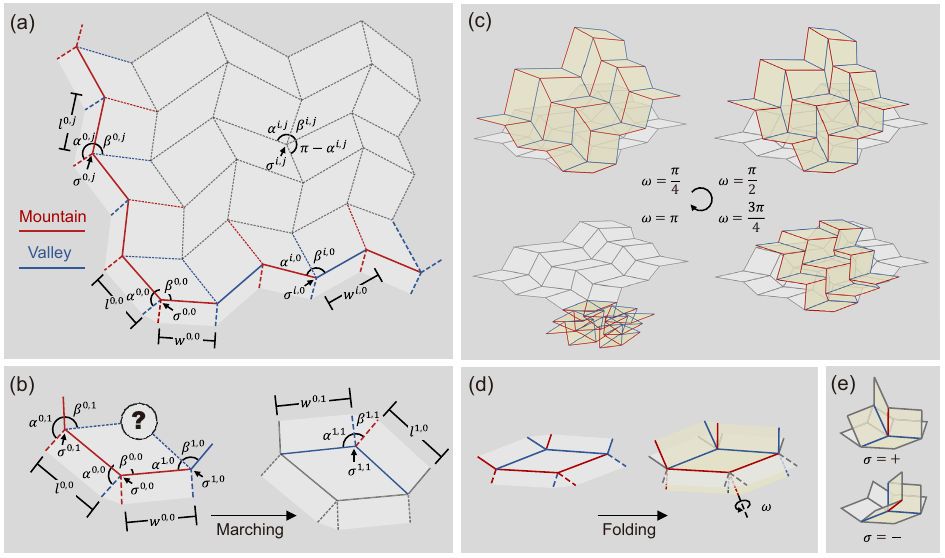}
\centering
\caption{Marching algorithm for deployable origami.  (a) The design of a RFFQM crease pattern is fully determined by input data provided at the left and bottom boundary (i.e., the red and blue solid lines and M-V assignment at each boundary vertex), which is a collection of angles, lengths, and signs encoding the M-V assignment on the ``L"-shaped outline. (b) The algorithm is initialized at the lower left corner.  Since input data at three of four vertices is provided, the fourth vertex is determined by foldability.  The overall crease pattern is then obtained by marching and making repeated use of this basic fact. (c) The crease pattern emerging from this algorithm is RFFQM.  The pattern's deformations are characterized by  a single DOF --- a continuous motion in $\omega$ from flat ($\omega=0$) to folded-flat ($\omega=\pi$). { (d) The folding parameter $\omega$ is defined by the folding angle of the crease below the vertex at the lower left corner, i.e., the red dashed crease in this figure. (e)} The M-V assignment for the motion is indicated by the choice of $+$ or $-$ at each vertex.}
\label{fig:forward}
\end{figure}

\section{Inverse design framework}

\subsection{Marching algorithm for deployable origami by RFFQM}

To begin, we recall the characterization of the designs and deformations of all RFFQM via the marching algorithm derived in \cite{feng2020designs}. This algorithm will be a key ingredient to our inverse design framework. 

A quadrilateral mesh crease pattern is comprised of quad panels ($M$ columns and $N$ rows) arranged in a plane and connected along creases. RFFQM is a special class of quad-mesh origami patterns with the desirable properties for deployment that restrict their design. Specifically and as illustrated in Fig.~\ref{fig:forward}(a), a RFFQM crease pattern with $M\times N$ panels is characterized by two sector angles at each vertex: $0 < \alpha^{i,j}, \beta^{i,j} < \pi$,  for $i=0,1,\ldots,M$, $j=0,1,\ldots,N$.  The other two sector angles at each vertex are constrained  so that the sum of all four angles is $2\pi$ (developability) and  the sum of opposite angles is $\pi$ (flat-foldability/Kawasaki's condition).  These conditions are necessary for RFFQM but far from sufficient.

Here, we address sufficiency using a marching algorithm that is initialized by the input data indicated schematically by the red and blue line segments in Fig.~\ref{fig:forward}(a).  This data is  a collection of all of the angles $0 < \alpha^{i,0}, \beta^{i,0},\alpha^{0,j}, \beta^{0,j}< \pi$, lengths $w^{i,0}, l^{0,j} >0$ and signs $\sigma^{i,0}, \sigma^{0,j} = +$ or $-$ that parameterize the left and bottom ``L''-shaped boundary creases on the $M \times N$ pattern. Note, the signs encode valid {mountain-valley (M-V)} assignments ({ Fig.~\ref{fig:forward}(e)}) at each vertex on the ``L'', and the exact formula relating signs to the M-V assignments is provided in \ref{sect:ap-marching}. From hereon, we represent this data compactly through arrays $\boldsymbol{\alpha}_0, \boldsymbol{l}_0$ and $\boldsymbol{\sigma}_0$ that list all such boundary angles, lengths and signs, respectively.

We proved in \cite{feng2020designs} that it is possible to march algorithmically and discover that:
\begin{theorem*}
For any input data $(\boldsymbol{\alpha}_0,\boldsymbol{l}_0, \boldsymbol{\sigma}_0)$ assigned as above,  there is exactly one or zero RFFQM consistent with this data set.
\end{theorem*}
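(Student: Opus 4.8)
The plan is to prove the statement by turning the global existence-and-uniqueness question into a local, one-vertex propagation claim and then iterating it along a causal sweep of the mesh. The guiding principle, already visible in Fig.~\ref{fig:forward}(b), is that a single quadrilateral face has four vertices, and once the crease data (the sector angles, the incident crease lengths, and the M-V sign) is fixed at three of them, the constraints of developability, flat-foldability, and rigid foldability leave \emph{at most one} admissible completion at the fourth. Granting this local claim, the theorem follows: the input $(\boldsymbol{\alpha}_0,\boldsymbol{l}_0,\boldsymbol{\sigma}_0)$ fixes every vertex on the bottom row $j=0$ and the left column $i=0$, so I would order the remaining vertices lexicographically (by $j$, then by $i$) and, at the vertex $(i+1,j+1)$, invoke the local claim using its three already-determined neighbours $(i,j)$, $(i+1,j)$, $(i,j+1)$. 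Because each step is deterministic, the march either runs to completion, producing a single fully-determined pattern, or it reaches a step at which no admissible fourth vertex exists, in which case no RFFQM is consistent with the data --- precisely the ``exactly one or zero'' dichotomy.

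The heart of the argument is therefore the local propagation lemma, which I would establish in two stages. First, at the new vertex the other two sector angles are pinned down by the two scalar conditions already noted in the text: the four angles sum to $2\pi$ (developability) and opposite angles sum to $\pi$ (Kawasaki flat-foldability). Second --- and this is where rigid foldability enters --- the remaining crease lengths and the M-V sign are fixed by compatibility of the rigid folding motion across the two creases that the new vertex shares with its known neighbours. Here I would use the rank-one compatibility formulation of \cite{feng2020designs}: each rigid panel deforms by a rigid motion, adjacent panels must agree (rank-one connect) along their shared crease throughout the one-parameter motion in $\omega$, and each degree-four flat-foldable vertex carries the standard fold-angle multiplier relations among its four creases. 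Requiring these relations to be consistent around the interior vertex reduces to a single scalar equation for the remaining geometric unknown, whose solution, when it lies in the admissible ranges ($0<\alpha,\beta<\pi$, positive lengths, valid sign), is unique, and whose failure to exist there is exactly the ``zero'' case.

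I expect the main obstacle to be this local lemma --- specifically, showing that the rigid-foldability constraint, after the angles have been fixed, collapses to \emph{one} scalar condition with a \emph{unique} root, together with the bookkeeping that determines the M-V sign rather than leaving it free. Degree-four vertex kinematics are delicate: the fold-angle multipliers are nonlinear in the sector angles, and one must verify that the compatibility across the shared crease is non-degenerate, so that the unknown is genuinely determined (neither under- nor over-constrained) for generic admissible data, while correctly flagging the degenerate or out-of-range cases as ``zero.'' A secondary, more routine task is to confirm both directions of the reduction: that every RFFQM consistent with the boundary data necessarily satisfies the local relations (so that no solution is overlooked), and conversely that a march which completes within the admissible ranges does yield a genuinely rigidly and flat-foldable pattern (sufficiency), so that the count it produces is exactly the number of RFFQM consistent with $(\boldsymbol{\alpha}_0,\boldsymbol{l}_0,\boldsymbol{\sigma}_0)$.
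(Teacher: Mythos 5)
Your proposal follows essentially the same route as the paper: both reduce the theorem to a local lemma that three known vertices of a panel determine the fourth uniquely or not at all, and then iterate this determination by marching from the bottom-left corner, with the local step justified by the fold-angle multiplier/rank-one compatibility analysis of \cite{feng2020designs} (made explicit in the paper's Eqs.~(\ref{eq:compatibility})--(\ref{eq:marching})). The obstacles you flag---uniqueness of the root of the scalar compatibility condition, determination of the M-V sign, and sufficiency of a completed march---are exactly the points the paper delegates to that reference, so no substantive difference in approach remains.
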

\noindent This theorem is established by a series of local calculations, starting at the panel on the bottom-left corner of the pattern (Fig.~\ref{fig:forward}(b)). The input data provides  the geometry and M-V assignment of creases  at three of four vertices of this panel. The  fourth vertex is then characterized by attempting to constrain the crease pattern to be rigidly and flat-foldable. The fundamental result derived in \cite{feng2020designs} is that, under this constraint, the  crease geometry and M-V assignment at this final vertex are either uniquely determined from the other three by explicit formulas, or the data is  incompatible\footnote{This means that there is no solution for the fourth vertex.}.   In the case of  compatible data, we can proceed to an adjacent panel, and iterate the calculation since we again know all relevant data at three of four vertices. The criteria for compatible data and the iterative formulas are provided in \ref{sect:ap-marching}.2-4.  By this procedure, we obtain an explicit marching algorithm that either discovers a unique RFFQM pattern or fails due to incompatibility at some point during iteration.

Let us assume compatible input data $(\boldsymbol{\alpha}_0,\boldsymbol{l}_0, \boldsymbol{\sigma}_0)$, so that we can compute the overall crease pattern by this marching algorithm.  This pattern is guaranteed to exhibit a single DOF folding motion that evolves {each folding angle from $0$ to $\pi$  (or $-\pi$)} monotonically under the prescribed M-V assignments.  We characterize this motion by a folding parameter $\omega$ such that $\omega = 0$ describes the flat crease pattern, $\omega = \pi$ the folded-flat state, and $0<\omega <\pi$ evolves the pattern from flat to folded flat (Fig.~\ref{fig:forward}(c)).  As a result, the kinematics of the origami structure are parameterized by
\begin{equation}
\begin{aligned}\label{eq:yij}
&\big\{ \mathbf{y}^{i,j}(\boldsymbol{\alpha}_0,\boldsymbol{l}_0, \boldsymbol{\sigma}_0, \omega) \big|  i = 0,1,\ldots, M, j = 0, 1, \ldots, N \big\},
\end{aligned}
\end{equation}
where $\mathbf{y}^{i,j}$ are the vertex positions (in 3D) on the deformed origami structure (determined by $\boldsymbol{\alpha}_0,\boldsymbol{l}_0, \boldsymbol{\sigma}_0, \omega$). This parameterization is also determined explicitly by marching \cite{feng2020designs}; the formulas for doing so are provided in \ref{sect:ap-marching}.5-7. { Briefly, regarding these formulas, the folding parameter $\omega$ corresponds to the folding angle of the fictitious crease below the $(0,0)$ vertex shown in Fig.~\ref{fig:forward}(a) and (d). This crease, along with the other boundary creases at the left and bottom boundary, are used to seed the pattern and its kinematics but are not included in the output  (Fig.~\ref{fig:forward}(c)) for the sake of convenience; their exclusion streamlines  the implementation of the overall marching algorithm.}

For inverse design, a key point with the marching algorithm is that discovering a RFFQM pattern and computing its kinematics in Eq.~(\ref{eq:yij}) are efficient calculations; both the pattern and any of its folded states are determined by computations that scale linearly with the number of panels as $ O(MN)$ due to the explicit iterative nature of the procedure.  Consequently, we can dedicate most of our computational resources towards solving an inverse design problem, rather than computing { the design and kinematic constraints involved in the discovery of such patterns.}

\subsection{General inverse design strategies.} 
Inverse problems in origami concern designing crease patterns to fold into structures with specified properties. Here, we describe a general framework for the inverse design of deployable origami structures by RFFQM, and discuss our approach in the context of related research. 

Let $\{ \mathbf{y}^{i,j}\}$ denote a collection of  vertices $\mathbf{y}^{0,0}, \mathbf{y}^{1,0}, \ldots ,\mathbf{y}^{M,N}$ in 3D space.  Suppose we want to arrange the vertices to correspond to an origami deformation of a RFFQM crease pattern with $M \times N$  panels that, in addition, {minimizes an objective function consistent with some inverse problem.} Tachi's Theorem \cite{ tachi2009generalization} indicates that we may describe { such a general optimization} in the following way:
{
\begin{equation}
\begin{aligned}\label{eq:asdf}
\min\limits_{\{ \mathbf{y}^{i,j}\}} &\quad f_{\text{obj.}} ( \{ \mathbf{y}^{i,j}\}) \\
\text{subject to}&\quad \begin{cases}
g_{\text{dev.}}(  \mathbf{y}^{i,j} , \mathbf{y}^{i+1,j},   \mathbf{y}^{i,j+1}, \mathbf{y}^{i-1,j}, \mathbf{y}^{i,j-1}  ) = 0  &\text{ if $(i,j)$ indexes an interior vertex},  \\
g_{\text{ffold.}}(  \mathbf{y}^{i,j} , \mathbf{y}^{i+1,j},   \mathbf{y}^{i,j+1}, \mathbf{y}^{i-1,j}, \mathbf{y}^{i,j-1}  ) = 0 &\text{ if $(i,j)$ indexes an interior vertex}.
\end{cases}
\end{aligned}
\end{equation}}
\noindent Here, $f_{\text{obj.}}$ corresponds to some suitably chosen objective function of all or some of the vertices $\{ \mathbf{y}^{i,j}\}$,  $g_{\text{dev.}}$ denotes the \textit{developability} constraint (the sector angles at a vertex sum to $2 \pi$), and $g_{\text{ffold.}}$ the \textit{flat-foldable} constraint (opposite sector angles at a vertex sum to $\pi$). As indicated, these constraints can be written in terms of five neighboring vertices on $\mathbb{R}^3$ via the formulas 
\begin{equation}
\begin{aligned}
&g_{\text{dev.}}(\mathbf{v}_0, \mathbf{v}_1, \mathbf{v}_2, \mathbf{v}_3, \mathbf{v}_4) = \sum_{i=1,\ldots,4} \arccos\big( \tfrac{\mathbf{v}_i - \mathbf{v}_0}{|\mathbf{v}_i - \mathbf{v}_0|} \cdot  \tfrac{\mathbf{v}_{i+1} - \mathbf{v}_0}{|\mathbf{v}_{i+1} - \mathbf{v}_0|} \big)  - 2\pi , \\
&g_{\text{ffold.}}(\mathbf{v}_0, \mathbf{v}_1, \mathbf{v}_2, \mathbf{v}_3, \mathbf{v}_4) = \arccos\big( \tfrac{\mathbf{v}_2 - \mathbf{v}_0}{|\mathbf{v}_2 - \mathbf{v}_0|} \cdot  \tfrac{\mathbf{v}_{1} - \mathbf{v}_0}{|\mathbf{v}_{1} - \mathbf{v}_0|} \big) + \arccos\big( \tfrac{\mathbf{v}_4 - \mathbf{v}_0}{|\mathbf{v}_4 - \mathbf{v}_0|} \cdot  \tfrac{\mathbf{v}_{3} - \mathbf{v}_0}{|\mathbf{v}_{3} - \mathbf{v}_0|} \big)  - \pi , \\
\end{aligned}
\end{equation} 
where $\mathbf{v}_5 = \mathbf{v}_1$ for the former formula, { and the side lengths $|\mathbf{v}_i -\mathbf{v}_0|$ are assumed\footnote{{ The assumption can be enforced by introducing inequality constraints to the optimization. Alternatively, a well-chosen objective function will ensure that optimization does not drive the design towards vanishing side lengths.}}  to be positive to apply the formulas.} More precisely, Tachi's theorem furnishes the following result: If we can find vertices $\{\mathbf{y}^{i,j}\}$ 
\begin{itemize}[leftmargin=*]
\item that solve all the $2(M-1)(N-1)$ equality constraints in Eq.~(\ref{eq:asdf}) for an $M \times N$ crease pattern
\item and that do not all lie { on a plane in} $\mathbb{R}^3$
\end{itemize}
then the set $\{\mathbf{y}^{i,j}\}$ describes the vertices of a rigid origami deformation of an $M \times N$ RFFQM crease pattern.  So any solution to Eq.~(\ref{eq:asdf}) corresponds to an origami structure that can  be designed on a flat reference crease pattern and deployed by a folding motion (mechanism) to achieve the objective.  Note though, regardless of the objective function, this optimization is a non-convex problem because the equality constraints are nonlinear. There are also $2(M-1)(N-1)$ nonlinear  equality constraints and $3(M+1)(N+1)$ unknowns  to optimize  in the collection of vertices $\{ \mathbf{y}^{i,j}\}$.  Thus, the optimization is  fundamentally challenging and fraught with potential scalability issues. Nevertheless, many researchers have attacked various aspects of this problem.

Tachi \cite{tachi2010freeform}, in particular, pioneered the approach in Eq.~(\ref{eq:asdf}) as a strategy for free-form origami by linearizing around  known origami patterns and choosing an objective that  perturbs the vertices towards desired positions, while maintaining the constraints.  Dudte et al.~\cite{Levi2016Programming} augmented the equality constraints in Eq.~(\ref{eq:asdf}) to allow some flexibility during the optimization and used this approach to demonstrate a variety of origami structures that approximate curved surfaces.  Importantly though, they achieved these results typically by relaxing  the flat-foldable constraint $g_{\text{ffold}}$; so the origami designs produced by their framework generally cannot deploy as strict mechanisms from the flat state to the optimized state. Hu et al.~\cite{Hu2020Rigid} recently adapted the  approach of  Dudte et al.~to allow for the flat-foldable constraint.  However, their method reported difficulty in converging to a solution for crease patterns with a large number of panels; an issue, we surmise,  is likely due to the significant challenge of satisfying all the equality constraints numerically when optimizing a finely meshed origami pattern.

Our key idea is to \textit{eliminate the equality constraints altogether} by taking advantage of the  characterization of RFFQM furnished by the marching algorithm. As discussed, this  algorithm efficiently parameterizes a RFFQM crease pattern by the angles, lengths and mountain-valley assignments on the ``L"-shaped boundary indicated by the arrays $(\boldsymbol{\alpha}_0, \boldsymbol{l}_0, \boldsymbol{\sigma}_0)$, and it efficiently parameterizes the kinematics by a folding parameter $\omega$. We can therefore directly and efficiently replace the optimization in Eq.~(\ref{eq:asdf}) with 

{
\begin{equation}
\begin{aligned}\label{eq:asdf1}
\min\limits_{\boldsymbol{\alpha}_0,\boldsymbol{l}_0, \boldsymbol{\sigma}_0, \omega} &\quad \tilde{f}_{\text{obj.}} (\boldsymbol{\alpha}_0, \boldsymbol{l}_0, \boldsymbol{\sigma}_0, \omega) \\
\text{subject to}&\quad \begin{cases}
(\boldsymbol{\alpha}_0, \boldsymbol{l}_0, \boldsymbol{\sigma}_0) \text{ is compatible input data}, \\
\omega \in (0, \pi).
\end{cases}
\end{aligned}
\end{equation}}
\noindent Here we simply replace  $f_{\text{obj.}}$ with $ \tilde{f}_{\text{obj.}} (\boldsymbol{\alpha}_0, \boldsymbol{l}_0, \boldsymbol{\sigma}_0, \omega) = f_{\text{obj.}}( \{ \mathbf{y}^{i,j}(\boldsymbol{\alpha}_0, \boldsymbol{l}_0, \boldsymbol{\sigma}_0, \omega)\})$; so it is the same general objective function, just with the vertices explicitly parameterized by the marching algorithm.

{While Eq.\;(\ref{eq:asdf1}) describes a general optimization scheme (equivalent to Eq.\;(\ref{eq:asdf})) over the family of RFFQM crease patterns, the  variable ${\boldsymbol{\sigma}}_0$ is an array of discrete variables and thus difficult to optimize numerically. As a point of practical implementation, we instead tackle this optimization problem under a prescribed $\bar{\boldsymbol{\sigma}}_0$, i.e., 
\begin{equation}
	\begin{aligned}\label{eq:asdf2}
	\min\limits_{\boldsymbol{\alpha}_0,\boldsymbol{l}_0, \omega} &\quad \hat{f}_{\text{obj}}(\boldsymbol{\alpha}_0, \boldsymbol{l}_0, \omega) =   \tilde{f}_{\text{obj.}} (\boldsymbol{\alpha}_0, \boldsymbol{l}_0, \bar{\boldsymbol{\sigma}}_0, \omega) \\
	\text{subject to}&\quad \begin{cases}
	(\boldsymbol{\alpha}_0, \boldsymbol{l}_0, \bar{\boldsymbol{\sigma}}_0) \text{ is compatible input data}, \\
	\omega \in (0, \pi).
	\end{cases}
	\end{aligned}
	\end{equation}
{ In this setting, the structure of compatible input data has nice properties for numerical implementation:} Suppose we have identified some compatible input data $(\bar{\boldsymbol{\alpha}}_0, \bar{\boldsymbol{l}}_0 , \bar{\boldsymbol{\sigma}}_0)$ (e.g., a well-known RFFQM origami structure like the Miura-Ori), then we can rigorously show that, with  M-V assignment indicated by $\bar{\boldsymbol{\sigma}}_0$ held fixed, there is an open neighborhood of $(\bar{\boldsymbol{\alpha}}_0, \bar{\boldsymbol{l}}_0)$ on which the data is also compatible. Additionally, we can show that the formulas for vertex positions $\bfy^{i,j}(\boldsymbol{\alpha}_0, \boldsymbol{l}_0, \bar{\boldsymbol{\sigma}}_0, \omega)$ are smooth for $(\boldsymbol{\alpha}_0, \boldsymbol{l}_0)$ in this neighborhood and for $\omega \in (0,\pi)$.
As a result, the optimization in Eq.~(\ref{eq:asdf2}) is a standard nonlinear programming problem over an open subset of $\mathbb{R}^{3M + 3N + 2}$ with smooth formula generating folded origami configurations, which means it can be treated using standard numerical schemes (see \ref{sect:ap-resources}).}
A precise statement and proof of this technical result is provided in \ref{sect:ap-marching}.8.

Finally, there are some notable benefits to formulating the optimization via {Eq.~(\ref{eq:asdf2})} rather than using vertex based approaches  like Eq.~(\ref{eq:asdf}):
\begin{itemize}[leftmargin=*]
\item \textit{Strict guarantees on deployability of the origami.}~Most of the striking examples of optimized origami structures found in the literature, such as in \cite{Levi2016Programming},  do not actually solve the optimization in Eq.~(\ref{eq:asdf}); rather, they solve an augmented vertex based approach that relaxes the flat-foldable constraints. In the relaxed setting, it is much easier to achieve an objective (e.g., surface approximation) since the optimization has less constraints. But there are also no guarantees that a functional property like {\textit{deployability}} --- the ability to fold as a mechanism to the target origami structure, either from the flat crease pattern or from a compact state --- can be ensured.  

In contrast, every origami structure obtained by the optimization in {Eq.~(\ref{eq:asdf2})} is rigidly and flat-foldable, so {deployable} in the sense described above. 

\item \textit{A direct encoding of the dimensionality of the design space.}~The design space for a RFFQM is proportional to the number of vertices on the boundary $O(M+N)$, not the number of vertices of the entire pattern $O(MN)$. This fact is directly encoded into the optimization in {Eq.~(\ref{eq:asdf2})}, yet not at all transparent\footnote{By counting DOFs and constraints in Eq.~(\ref{eq:asdf}), one might incorrectly  suspect that the design space in this optimization is in the order of $MN$.  This counting argument is, however, inappropriate because the equality constraints are nonlinear. As a simple example, the optimization problem ``$\min f(\mathbf{y})$ subject to  $|\mathbf{y}| = 0$" has a unique solution $\mathbf{y} = \mathbf{0}$ no matter the dimension of the ambient Euclidean space. { This solution always has zero DOF, even though the counting argument would suggest the DOFs of the ambient space.}} in the vertex based approach in Eq.~(\ref{eq:asdf}).

This disparity in dimensionality has significant implications for how one should design an objective function.  In particular, a common approach to constrained optimization is to introduce additional constraints that are consistent with the desired objective. In the optimization of surfaces developed in \cite{Levi2016Programming}, for example, the authors directly attach half of the vertices to the surface they wish to approximate, then attempt to use the remaining freedom to satisfy the origami constraints during the optimization. This procedure evidently works well when the flat-foldable constraints are relaxed. However, there are only $O(M + N)$ degrees-of-freedom in the design space of RFFQM. So any objective that introduces an additional set of $O(MN)$ nontrivial constraints to Eq.~(\ref{eq:asdf})  is {unlikely to be feasible} for $MN \gg1$.  
\end{itemize} 

\subsection{Optimizing for targeted surfaces}

\begin{figure}[t!]
\includegraphics[width=0.9\linewidth]{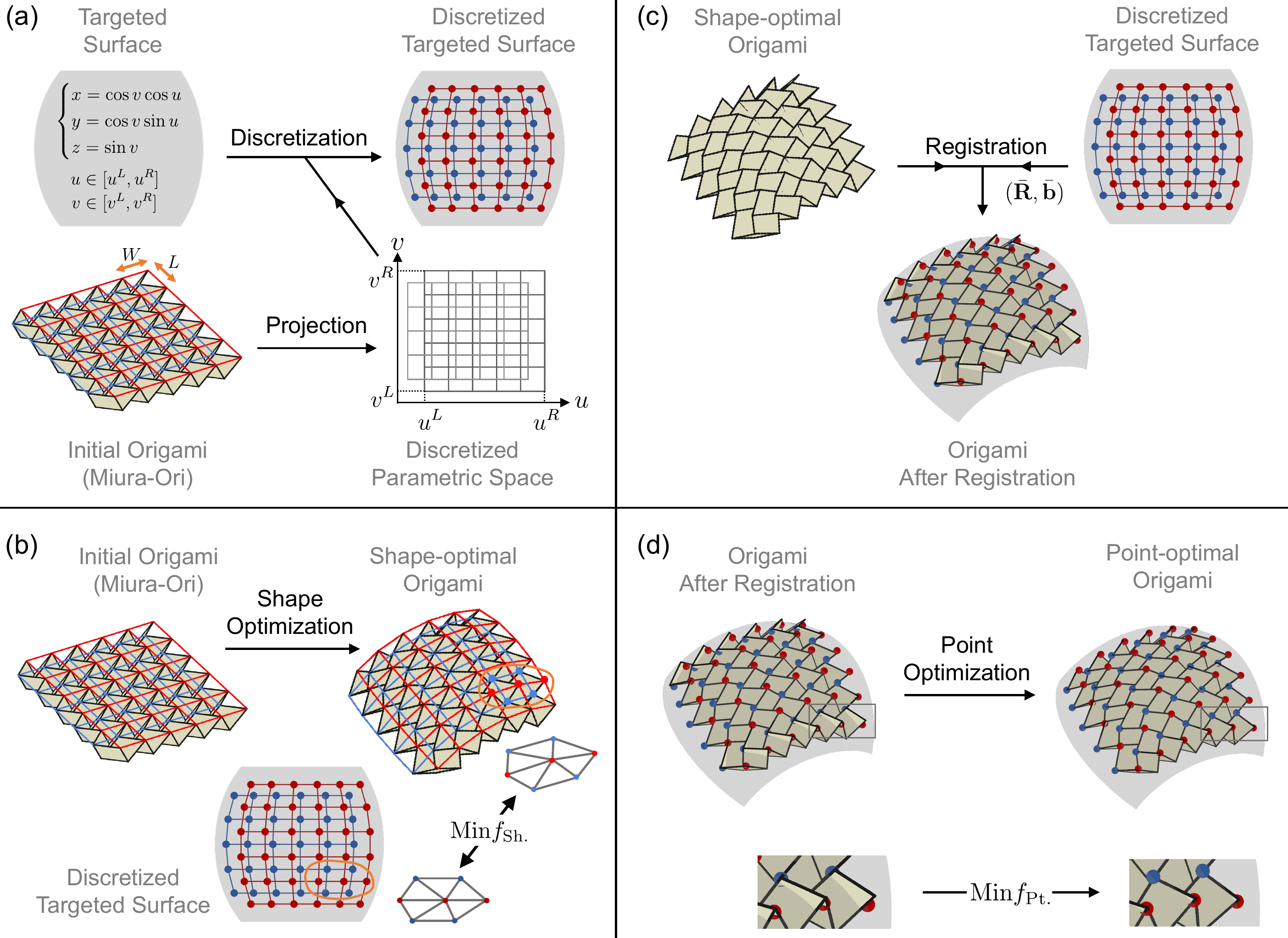}
\centering
\caption{{ Inverse design schematic. (a) Discretization: the targeted surface is discretized matching the grid number and offset of the initial Miura-Ori.  (b) Shape optimization: the targeted surface and origami surface are  triangulated, and discrete notions of metric and curvature are compared on like triangles to be optimized for shape. (c) Registration: the targeted surface is then rotated $\bar{\mathbf{R}}$ and translated $\bar{\mathbf{b}}$ onto the shape optimized patten in a process termed registration.  (d) Point optimization: a final optimization is performed to match the origami vertices to like vertices of the registered targeted surface to produce the optimal origami.}}
\label{fig:inverse}
\end{figure}

We now restrict our focus to a class of challenging and important inverse problems --- that of approximating targeted surfaces by optimizing over the family of deployable origami in Eq.~(\ref{eq:yij}).
We note however that the marching algorithm and the procedure outlined with {Eq.~(\ref{eq:asdf2})} can  be used for other optimization strategies, e.g.,  optimal packaging, locomotion, or
optimization of functional or dynamic properties.

In the problem of approximating surfaces using  deployable origami structures, we confront three basic issues:
\begin{enumerate}
\item[(I)]   Origami structures are rough, whereas the surfaces we often aim to approximate are smooth.
\item[(II)]   The delicate non-linear couplings relating $(\boldsymbol{\alpha}_0, \boldsymbol{l}_0, \boldsymbol{\sigma}_0, \omega)$ to the origami structure Eq.~(\ref{eq:yij}) and then to a surface, via some $\tilde{f}_{\text{obj.}}$ in {Eq.~(\ref{eq:asdf2})}, lead to an inverse design problem of minimizing a non-convex objective function over a non-convex set.
\item[(III)] Ensuring the functional property of deployability, while also obtaining a quality approximation, is challenging because the dimensionality of the design space is small, i.e., $O(M+N)$ as opposed to $O(MN)$.
\end{enumerate}
For these reasons, success in inverse design requires a careful strategy, both for formulating the optimization and choosing an initial condition.

We address these issues by embracing the Miura-Ori as a template for inverse design.  The key ideas are described in Fig.~\ref{fig:inverse}(a).  The initial origami shown is a Miura-Ori that has been folded along its single DOF motion to a 3D configuration, which we call a partially folded state (since it is neither flat, nor fully folded flat).  Importantly, this partially folded Miura-Ori, while itself a rugged corrugated structure, has an ordered collection of points forming red and blue lattices that discretize a planar region in 3D space.  Note, these lattices are offset from one another in the plane but have identical rectangular unit cells. Also, this basic fact holds regardless of the geometry/number of unit cells or the choice of  partially folded state.  We therefore take the offset lattice that emerges from a partially folded Miura-Ori as a seed to discretize the targeted surface and initialize the two-stage optimization that compares deployable origami structures to this surface.

{\textit{Basic setup with the Miura-Ori.}} To explain these ideas concretely, it is useful to describe the Miura-Ori using the marching algorithm. This is done by first choosing the input data to the algorithm: $(\boldsymbol{\alpha}_{\text{M-O}}, \boldsymbol{l}_{\text{M-O}}, \boldsymbol{\sigma}_{\text{M-O}})$ such that
\begin{equation}
\begin{aligned}\label{eq:inputData}
&\alpha^{i,0} =  \alpha,  && \beta^{i,0} = \pi - \alpha, && \text{for all $i$}, \\
&\alpha^{0,j} = \alpha, && \beta^{0,j} = \pi - \alpha,  && \text{for all } j \text{ even,} \\
&\alpha^{0,j} =  \pi - \alpha, && \beta^{0,j} = \alpha,  &&\text{for all } j \text{ odd,} \\
&w^{i,0} = w, &&   l^{0,j} = l,  && \text{for all } i \text{ and } j, \\
& \sigma^{i,0} = +, && \sigma^{0,j} = +, && \text{for all } i \text{ and } j,
\end{aligned}
\end{equation}
where $i$ and $j$ are cycled to give data consistent with an $M \times N$ pattern and  $0 <\alpha <\pi$ and $l,w > 0$ describe the geometry and corrugation of the origami.  We also assume $M$ and $N$ are even to simplify some notation below. This data, together with a folding parameter $0 < \omega_{\text{M-O}} < \pi$,  initializes the marching algorithm, which then produces a generic partially folded Miura-Ori with vertices
\begin{equation}
\begin{aligned}\label{eq:yMO}
\mathbf{y}^{i,j}_{\text{M-O}} = \mathbf{y}^{i,j}(\boldsymbol{\alpha}_{\text{M-O}}, \boldsymbol{l}_{\text{M-O}}, \boldsymbol{\sigma}_{\text{M-O}}, \omega_{\text{M-O}})
\end{aligned}
\end{equation}
in 3D space (recall Eq.~(\ref{eq:yij})).  This origami has an offset lattice, e.g., the red and blue mesh points for the initial origami in Fig.~\ref{fig:inverse}(a).  It is given by collecting the vertices  on the “top” surface of the origami through the formulas
\begin{equation}
\begin{aligned}\label{eq:rMO}
\mathbf{r}^{i,j}_{\text{M-O}} = \mathbf{y}^{2i,j}_{\text{M-O}}
\end{aligned}
\end{equation}
for $i =0,1, \ldots, M/2$ and $j = 0,1, \ldots, N$. ($M/2$ is an integer by assumption.)

Now, suppose we alter the angle and length input data in Eq.~(\ref{eq:inputData}) with perturbations $\boldsymbol{\alpha}_0 = \boldsymbol{\alpha}_{\text{M-O}} + \boldsymbol{\delta} \boldsymbol{\alpha}_0$ and $\boldsymbol{l}_0 =\boldsymbol{l}_{\text{M-O}} + \boldsymbol{\delta} \boldsymbol{l}_0$, while keeping the M-V assignment $\boldsymbol{\sigma}_{\text{M-O}}$ fixed.  A large class of these perturbations is compatible with RFFQM\footnote{The data is guaranteed to be compatible for sufficiently small perturbations. By our numerical investigation, it is also evident that the perturbations do not need to be all that small.}.  We can therefore initialize the marching algorithm, with a compatible perturbation and a folding parameter $0 < \omega < \pi$, to produce a new origami structure. As in the Miura-Ori case, we can collect the vertices on the  “top” surface of this new origami structure through the formulas
\begin{equation}
\begin{aligned}\label{eq:oriSurface}
\mathbf{r}^{i,j}(\boldsymbol{\alpha}_0, \boldsymbol{l}_0, \omega) = \mathbf{y}^{2i,j}(\boldsymbol{\alpha}_{0}, \boldsymbol{l}_{0}, \boldsymbol{\sigma}_{\text{M-O}}, \omega)
\end{aligned}
\end{equation}
for $i,j$ cycled as in Eq.~(\ref{eq:rMO}).  One way to view this collection is as a smooth deformation of the offset lattice in Eq.~(\ref{eq:rMO}). Embracing this viewpoint, we will call the mesh of these points an \textit{origami surface}. Note, each such surface is an explicit function of $(\boldsymbol{\alpha}_0, \boldsymbol{l}_0, \omega)$, and these  parameters can be varied.  So we can explore this large family of origami surfaces for the purpose of  inverse design.

Targeted surfaces of practical interest can often be described by a parameterization that maps a rectangular region in $2D$ to the surface.  Suppose we have one such surface  given by $\bar{\mathbf{r}}(u,v)$ for $u\in[u^L,u^R]$ and $v\in[v^L,v^R]$, e.g., the spherical cap in Fig.~\ref{fig:inverse}(a), and we wish to find an origami surface that resembles it. Since the origami surfaces above are inherently described by a discrete collection of points, we find it natural to make comparisons by invoking a discretization of the targeted surface given by
\begin{equation}
\begin{aligned}\label{eq:targetSurface}
\bar{\mathbf{r}}^{i,j} = \bar{\mathbf{r}}(u^{i,j}, v^{i,j})
\end{aligned}
\end{equation}
for $i,j$ cycled as in Eq.~(\ref{eq:rMO}). Here, the discrete points $u^{i,j}\in[u^L,u^R]$ and $v^{i,j}\in[v^L,v^R]$ are chosen based on a Miura-Ori offset lattice to exhibit the same zig-zag vertex distribution.  Recall that the offset lattice in Eq.~(\ref{eq:rMO}) depends on many parameters: $\alpha, w, l, N, M$ and $\omega_{\text{M-O}}$.  For simplicity and a uniform discretization, we choose $\alpha = \pi/3$, $w = l$ and $\omega_{\text{M-O}} = 3\pi/4$.  This choice results in an offset lattice with a nearly square unit cell of side lengths $L\approx W\approx l$, aspect ratio $W/L \approx 1$, total width $ \approx lM/2$, and total length $ \approx lN/2$. So we can treat the even integer $M$ as a free parameter dictating of the number of panels in the origami, then choose an even integer $N$ that best approximates the aspect ratio of the characteristic lengths\footnote{Here the characteristic lengths $\bar{L}_u$ and $\bar{L}_v$ represent the total size of the targeted surface along the $u$ and $v$ directions, respectively. For example we can take  $\bar{L}_u=u^R-u^L$ and $\bar{L}_v=v^R-v^L$ for the spherical cap in Fig.~\ref{fig:inverse}.} of the targeted surfaces $\bar{L}_u/\bar{L}_v \approx M/N$, and finally $l$ such that $lM/2 \approx \bar{L}_u$.  By these choices, we can use the construction in \ref{sect:ap-procedure}.1 to  project the offset lattice  to the $(u,v)$-plane, yielding a collection of points $(u^{i,j}, v^{i,j})$ that suitably discretize this space.

To this point, we have outlined a general strategy for obtaining a family of deployable origami surfaces and discretizing a (fairly) arbitrary targeted surface by choosing to embrace the Miura-Ori --- both for how we collect the points to describe origami surfaces and how we discretize the targeted surface.  These choices come with many benefits to the optimization, the heuristics of which are: 1)~The Miura-Ori is buried deep in the compatible set of parameters for RFFQM, meaning it can be perturbed in many directions without issues of incompatibility limiting the optimization. 2)~The red and blue mesh (Fig.~\ref{fig:inverse}(a)) collectively remains regular, even for large perturbations of a Miura-Ori. So there is a level of consistency when comparing these mesh points to analogous points on a smooth targeted surface.  3)~Finally, perturbed Miura-Ori surfaces have access to a wide range of effective curvatures and metrics.  So the optimization does not  get stuck in local minima of poor quality, at least for most surfaces of practical interest.  By combining these choices with a careful two-stage optimization procedure, we develop an approach that largely overcomes the issues discussed with (I-III).   We  develop the optimization procedure below, again using Fig.~\ref{fig:inverse} to guide the exposition. Additional details on the procedure are provided in \ref{sect:ap-procedure}.2.

{\textit{Step 1:~Discretization.}}~We fix a targeted surface of our choosing, $\bar{\mathbf{r}}(u,v)$, $u \in[u^L,u^R]$ and $v^{i,j}\in[v^L,v^R]$.  We also fix an even integer $M$, which sets the number of columns of panels for the origami. From these quantities, we construct the discretization of the targeted surface $\bar{\mathbf{r}}^{i,j}$ and the meshing of points of the origami surface $\mathbf{r}^{i,j}(\boldsymbol{\alpha}_0, \boldsymbol{l}_0, \omega)$ based on the Miura-Ori offset lattice, exactly as outlined with Eqs.~(\ref{eq:inputData}-\ref{eq:targetSurface}). For reference, we recall that the Miura-Ori parameters are labeled $(\boldsymbol{\alpha}_{\text{M-O}}, \boldsymbol{l}_{\text{M-O}}, \boldsymbol{\sigma}_{\text{M-O}}, \omega_{\text{M-O}})$; also, that the length input is chosen so that $w = l$ in Eq.~(\ref{eq:inputData}).  We therefore have $\boldsymbol{l}_{\text{M-O}} = l \mathbf{1}$ for an array $\mathbf{1}$, where each component is $1$.

\begin{figure}[t!]
\centering
\includegraphics[width=0.9\linewidth]{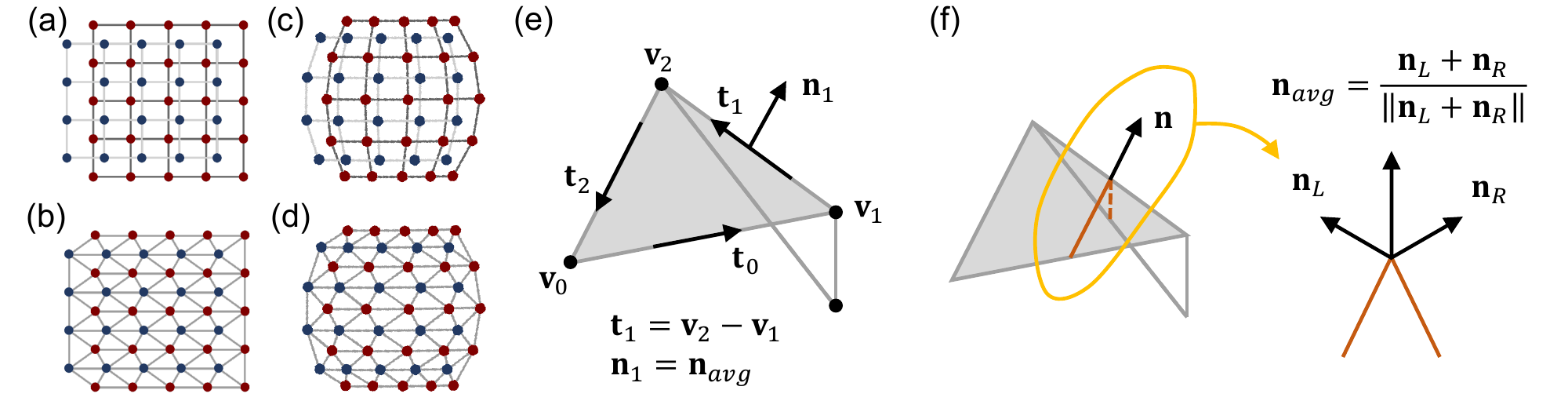}	
\caption{Shape description of the discrete surface: (a) Quad mesh in the parametric space; (b) Triangular mesh in the parametric space; (c) Quad mesh in the configuration space; (d) Triangular mesh in the configuration space. (e) Discretization of the tangent and normal vectors on the triangular mesh (reproduced from \cite{Rees2018Mechanics}). (f) Midedge normal (reproduced from \cite{Grinspun2006Computing}).}
\label{fig:discrete}
\end{figure}

{\textit{Step 2:~Shape optimization.}}  It is a well-known fact of differential geometry that two smooth parameterizations of surfaces from the same underlying domain are the same (up to  a Euclidean transformation) if and only if their first and second fundamental forms are the same. These geometric quantities are therefore the natural points of comparison for such parameterizations.    We consider an analogue of this comparison in an optimization of shape for triangular meshes of the targeted and origami surfaces. Specifically, we triangulate the two discretized surfaces ({far-right, Fig.~\ref{fig:inverse}(b)}) and compare shape operators that quantify discrete yet frame indifferent notions of metric and curvature on like triangles. 

The procedure we outline originates from \cite{Grinspun2006Computing} and has also been employed in the design of shape-changing biomimetic structures \cite{Rees2018Mechanics}.  As sketched in Figs.~\ref{fig:discrete}(a-d), we notice that the offset lattices and the discretization of surfaces (Eqs.~(\ref{eq:oriSurface}) and (\ref{eq:targetSurface})) based on these lattices have natural triangulations. Specifically, there are two sets of triangulations ---  one of the discrete origami surface in Eq.~(\ref{eq:oriSurface}) and one of the discrete targeted surface in Eq.~(\ref{eq:targetSurface}) --- with a one-to-one correspondence of triangles on each surface for which it is appropriate to compare shape.  We triangulate the two surfaces in this way and define shape operators on these triangles.  Consider a triangle on the interior of one of these surfaces and introduce the labeling in Figs.~\ref{fig:discrete}(e) and (f).  The shape operators for this triangle are 
\begin{equation}
\begin{aligned}
\mathbf{a} &= \frac{1}{\langle L \rangle}\left[\begin{matrix}  |\mathbf{t}_0| & |\mathbf{t}_1| & |\mathbf{t}_2| \end{matrix}\right], \\
\mathbf{b} &= 2\left[ \begin{matrix} \frac{(\mathbf{n}_1 - \mathbf{n}_0) \cdot \mathbf{t}_0}{|\mathbf{t}_0|}  &\frac{(\mathbf{n}_1 - \mathbf{n}_0) \cdot \mathbf{t}_1}{|\mathbf{t}_1|} &\frac{(\mathbf{n}_1 - \mathbf{n}_0) \cdot \mathbf{t}_2}{|\mathbf{t}_2|}  \\
\frac{(\mathbf{n}_2 - \mathbf{n}_1) \cdot \mathbf{t}_0}{|\mathbf{t}_0|}  &\frac{(\mathbf{n}_2 - \mathbf{n}_1) \cdot \mathbf{t}_1}{|\mathbf{t}_1|} &\frac{(\mathbf{n}_2 - \mathbf{n}_1) \cdot \mathbf{t}_2}{|\mathbf{t}_2|} \\
\frac{(\mathbf{n}_0 - \mathbf{n}_2) \cdot \mathbf{t}_0}{|\mathbf{t}_0|}  &\frac{(\mathbf{n}_0 - \mathbf{n}_2) \cdot \mathbf{t}_1}{|\mathbf{t}_1|} &\frac{(\mathbf{n}_0 - \mathbf{n}_2) \cdot \mathbf{t}_2}{|\mathbf{t}_2|} \end{matrix}\right],
\end{aligned}
\end{equation}
where $\mathbf{a}$ is non-dimensionalized by $\langle L \rangle$, the average length of the quad-mesh edges of the targeted surface (see \ref{sect:ap-procedure}.2).  In terms of the triangulation, the array $\mathbf{a}$ characterizes the shape of the triangle, and the matrix $\mathbf{b}$ describes its curvature since it relates to how this triangle is orientated relative to its neighbors. For completeness, the shape operators for a  boundary triangle take on a different form: we compute the array $\mathbf{a}$  as above but only compute one row of $\mathbf{b}$ since only two normals are defined in this case.

For the optimization, we list the local shape operators into a global ``shape array", so that each interior triangle contributes twelve elements to the list (the components of $\mathbf{a}$ and $\mathbf{b}$ above) and each boundary triangle six.  We let $\bar{\mathbf{S}}$ denote the shape array for the targeted surface and $\mathbf{S}(\boldsymbol{\alpha}_0, \boldsymbol{l}_0, \omega)$ the shape array for the origami surface.  We organize these shape arrays so that shape operator components of corresponding triangles --- on the origami and targeted surface --- have matching placement in these arrays.  With this organization, it is possible to show that $\bar{\mathbf{S}} = \mathbf{S}(\boldsymbol{\alpha}_0, \boldsymbol{l}_0, \omega)$ if and only if the two triangular meshes are the same up to Euclidean transformation; hence, the connection to first and second fundamental forms.  We therefore introduce
\begin{equation}
\begin{aligned}\label{eq:fSh}
f_{\text{Sh.}}( \boldsymbol{\alpha}_0, l_0 , \omega) = \frac{1}{N_T} |\bar{\mathbf{S}} - \mathbf{S}(\boldsymbol{\alpha}_0, l_0 \boldsymbol{1}, \omega)|^2
\end{aligned}
\end{equation}
as the objective function if $(\boldsymbol{\alpha}_0, l_0 \boldsymbol{1}, \boldsymbol{\sigma}_{\text{M-O}})$ is compatible input data to the marching algorithm, where $N_T$ is the number of vertices on the origami surface. If the data is, instead,  incompatible, a large positive constant $C_{\text{Num.}} \gg  f_{\text{Sh.}}(\boldsymbol{\alpha}_{\text{M-O}},l, \omega_{\text{M-O}})$ is returned.  This is a numerically convenient way to enforce compatibility during the optimization.   Note, the length input data here, $\boldsymbol{l}_0 = l_0 \boldsymbol{1}$, is restricted to be a rescaling of the Miura-Ori length input in order to preserve a relatively uniform aspect ratio for the origami panels.  To compute an optimum, we start from the Miura-Ori input data $(\boldsymbol{\alpha}_{\text{M-O}}, l, \omega_{\text{M-O}})$ and iterate numerically (see \ref{sect:ap-resources}) to arrive at a local minimum for $f_{\text{Sh.}}(\cdot)$ or an origami configuration which lies near the boundary of the compatible set of RFFQM input parameters.  For reference, we label this optimum $( \boldsymbol{\alpha}^{\star}_0, l^{\star}_0 , \omega^{\star})$. {The overall procedure is sketched in Fig.\;\ref{fig:inverse}(b).}

{\textit{Step 3:~Registration.}}~The origami surface obtained from shape optimization is indicated by the collection of vertices $\mathbf{r}^{i,j}(\boldsymbol{\alpha}^{\star}_0, l_0^{\star} \boldsymbol{1} , \omega^{\star})$.  Since shape optimization is frame indifferent, these vertices are not necessarily aligned and oriented with the like vertices $\bar{\mathbf{r}}^{i,j}$ on the targeted surface. Thus, we apply a rigid motion to the targeted surface to fit the vertices as best as possible by solving
\begin{equation}
\begin{aligned}
\min_{\mathbf{R} \in SO(3), \mathbf{b} \in \mathbb{R}^3} \sum_{i,j} | \mathbf{r}^{i,j}(\boldsymbol{\alpha}^{\star}_0, l_0^{\star} \boldsymbol{1} , \omega^{\star}) - (\mathbf{R} \bar{\mathbf{r}}^{i,j} + \mathbf{b})|^2
\end{aligned}
\end{equation}
({Fig.~\ref{fig:inverse}(c)}). Since the minimizing rigid motion here can be large, the solution is computed numerically using the coherent point drift method (see \ref{sect:ap-resources}), which is based on well-established ideas \cite{Myronenko5432191}.  For reference, we label the minimizing pair $(\bar{\mathbf{R}}, \bar{\mathbf{b}})$.

{\textit{Step 4:~Point optimization.}}~For many targeted surfaces, shape optimization provides a reasonable global approximation of shape.  However, there can be significant local deviation ({e.g., near the boundary of the pattern after registration; Fig.~\ref{fig:inverse}(c)}). We improve the approximation by perturbing the parameters $(\boldsymbol{\alpha}^{\star}_0, l_0^{\star} \boldsymbol{1} , \omega^{\star})$ to bring like vertices on the origami and targeted surface closer together ({Fig.~\ref{fig:inverse}(d)}).  Specifically, we introduce the second optimization step that takes
\begin{equation}
\begin{aligned}
&f_{\text{Pt.}}(\boldsymbol{\alpha}_0,\boldsymbol{l}_0, \omega, \bfR, \bfb) =   \frac{1}{N_T} \sum\limits_{i,j}| \mathbf{r}^{i,j}(\boldsymbol{\alpha}_0, \boldsymbol{l}_0, \omega)  -(\mathbf{R}\bar{\mathbf{r}}^{i,j} + \mathbf{b}) |^2
\end{aligned}
\end{equation}
as the objective function for compatible input data, and a large number  $\tilde{C}_\text{Num.} \gg f_{\text{Pt.}}( \boldsymbol{\alpha}_0^{\star},  l_0^{\star} \boldsymbol{1}, \omega^{\star}, \bar{\mathbf{R}}, \bar{\mathbf{b}})$ when the data is incompatible.  Note, the full set of length input data $\boldsymbol{l}_0$ is freely optimized in this step, and we also include a rigid motion term for a rotation $\mathbf{R} = \bar{\mathbf{R}} + \boldsymbol{\delta} \mathbf{R}$ and translation $\mathbf{b} = \bar{\mathbf{b}} + \boldsymbol{\delta}\mathbf{b}$ (likely optimal as small perturbations of the motion in registration).  Note also, in performing the optimization, we parameterize the rotation fully in terms of Euler angles by writing $\mathbf{R} = \mathbf{Q}(\eta, \xi, \zeta) \bar{\mathbf{R}}$ for 
\begin{equation}
\mathbf{Q}(\xi,\eta,\zeta)=
\left[ \begin{matrix} 
\cos\xi\cos\zeta-\cos\eta\sin\xi\sin\zeta  &-\cos\xi\sin\zeta-\cos\eta\cos\zeta\sin\xi &\sin\xi\sin\eta  \\
\cos\zeta\sin\xi+\cos\xi\cos\eta\sin\zeta  &\cos\xi\cos\eta\cos\zeta-\sin\xi\sin\zeta  &-\cos\xi\sin\eta \\
\sin\eta\sin\zeta                          &\cos\zeta\sin\eta                          &\cos\eta
\end{matrix}\right],
\end{equation}
where $\eta, \xi, \zeta$ are the Euler angles.  Finally, to compute an optimum, we start from the shape optimized  input data $(\boldsymbol{\alpha}_0^{\star}, l_0^{\star} \boldsymbol{1}, \omega^{\star})$, rotation $\bar{\mathbf{R}} = \mathbf{Q}(0,0,0) \bar{\mathbf{R}}$, and translation $\bar{\mathbf{b}}$. Then, we iterate numerically in the same manner as shape optimization (see \ref{sect:ap-resources}).  For reference, we label the optimal input data as $(\boldsymbol{\alpha}^{\star \star}_0, \boldsymbol{l}_0^{\star \star} , \omega^{\star \star})$, the optimal rotation as $\bfR^{\star \star}= \mathbf{Q}(\eta^{\star \star}, \xi^{\star \star}, \zeta^{\star \star}) \bar{\mathbf{R}}$, and the optimal translation as $\bfb^{\star \star}$.

{\textit{Step 5:~Quality of approximations.}}~We measure the maximum distance between like vertices on the origami and targeted surfaces after registration to characterize the quality of approximation.  This calculation is done for both the shape optimized and point optimized surfaces, i.e.,
\begin{equation}
\begin{aligned}
&d^{\star}= \max_{i,j} \frac{1}{\langle L \rangle} |\mathbf{r}^{i,j}(\boldsymbol{\alpha}^{\star}_0, l_0^{\star} \boldsymbol{1},  \omega^{\star}) -(\bar{\mathbf{R}} \bar{\mathbf{r}}^{i,j} + \bar{\mathbf{b}}) |, \\
&d^{\star \star}= \max_{i,j} \frac{1}{\langle L \rangle} |\mathbf{r}^{i,j}(\boldsymbol{\alpha}^{\star \star}_0, \boldsymbol{l}^{\star \star}_{0},  \omega^{\star\star}) - (\mathbf{R}^{\star \star} \bar{\mathbf{r}}^{i,j} + \mathbf{b}^{\star \star}) |,
\end{aligned}
\end{equation}
respectively. Here, $\langle L \rangle$ denotes the average length of the quad-mesh edges discretizing the targeted surface (see \ref{sect:ap-procedure}.2).  

{\textit{Comments and generalizations.}} {The heuristic for success in this two-stage optimization procedure is that shape optimization does the bulk of the work --- approximating well the global surface, except possibly in some small regions of the pattern ---  while point optimization supplies a refinement that corrects the poorly approximated regions but otherwise does not dramatically change the pattern.} Recall that the input length array is restricted when optimizing for shape (Eq.~(\ref{eq:fSh})).  Without this restriction, we find shape optimization to be far too flexible, leading to origami with distorted aspect ratios ill-suited for practical application (see \ref{sect:ap-procedure}.2 and Fig.~\ref{fig:ap-distort}).  We also find point optimization to be delicate, leading to quality results only if the input is already fairly close to the desired surface.  { These features motivated our two-stage optimization procedure.}  We typically look for shape optimization to yield $d^{\star} < 1$, so that distances between like vertices on the two surfaces are no larger than the characteristic length of the mesh-panels.  In this case, point optimization often yields quality refinement $d^{\star \star} \approx 0.5$ to 0.05 $ d^{\star}$ (see Table~\ref{tabel:eval}) by slight perturbation $(\boldsymbol{\alpha}^{\star \star}_0, \boldsymbol{l}^{\star \star}_{0},  \omega^{\star\star}) \approx (\boldsymbol{\alpha}^{\star}_0, l_0^{\star}\boldsymbol{1},  \omega^{\star})$, a result likely facilitated by the extreme non-linearity inherent to folding origami.

Finally, for some targeted surfaces discussed below, we choose an initial origami different from the Miura-Ori, as it significantly improves the quality of approximation resulting from the optimization.  We base our choice on direct numerical observations of origami structures that exhibit useful basic deformations (see \ref{sect:ap-observations}.1 and Fig.~\ref{fig:ap-deformation}).  { The versatility of our framework allows us to simply apply the step-by-step inverse design procedure as before, except replacing the Miura-Ori with the new initial origami and modifying the offset lattice that discretizes the targeted shape.} As the input origami need not be consistent with a planar or uniform discretized surface, we choose this offset lattice to coincide with the average tangent space of this surface (see \ref{sect:ap-observations}.2 and Fig.~\ref{fig:ap-offset}).

\begin{figure}[t!]
	\includegraphics[width=0.9\linewidth]{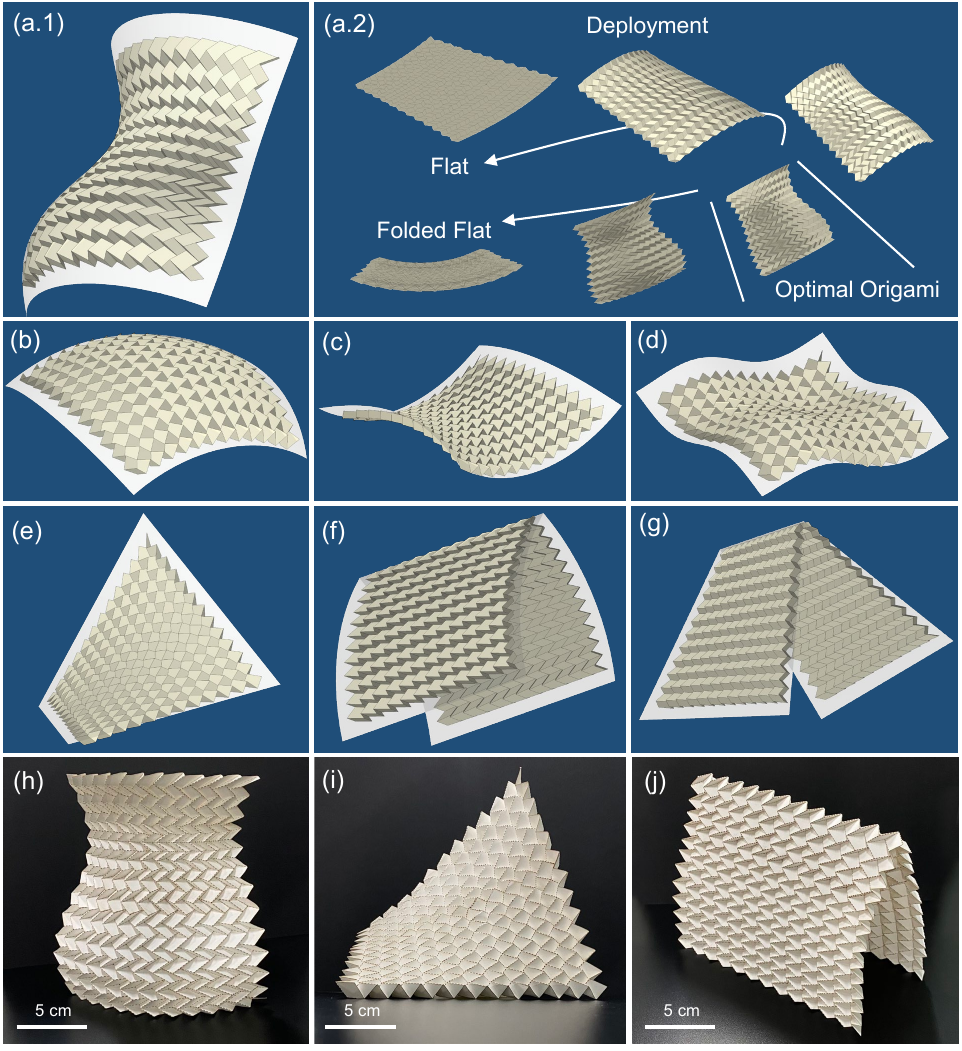}
	\centering
	\caption{Inverse design of surfaces by deployable origami. { Numerical examples:} (a) A quarter vase, and its deployment from states that are easy to manufacture and package. (b-e) Other examples of smooth surfaces: (b) Spherical cap, (c) hyperboloid, (d) 2D sinusoid, (e) saddle.  (f-g) Examples of surfaces with sharp ridges: (f) Connecting cylinders and  (g) connecting saddles. { Paper models: (h) The quarter vase. (i) The saddle. (j) The connecting cylinders.} }
	\label{fig:examples}
\end{figure}

\section{Examples and discussion}

\subsection{Numerical examples of deployable origami}
\label{sect:deployableExamples}
{ We demonstrate the numerical examples of our inverse design framework for targeted surfaces in Figs.~\ref{fig:examples} (a-g).}  In each example, the targeted surface is overlaid in grey onto the deformed optimal origami structure.  To reiterate, the origami structures shown are deployable --- they can be obtained by a single DOF folding motion from an easily manufactured flat state and an easily packaged folded-flat state.  We display this deployment capability with the first example and refer to Fig.~\ref{fig:ap-examples} and the supplementary Videos~S1-S7 for the others.  
With our marching algorithm, the optimization process takes a matter of minutes using standard computational resources (see \ref{sect:ap-resources}) for various given examples. We also numerically investigate the efficiency of our optimization scheme. The results of this investigation are provided in Fig.~\ref{fig:time}, and they indicate quadratic time complexity in terms of the number of origami panels. Finally, Table~\ref{tabel:eval} shows the exact  parameterizations of the targeted surfaces, i.e., $\bar{\mathbf{r}}(u,v)$, details about the initial Miura-Ori and optimal origami, and computational time for the optimization. 

\begin{figure}[!t]
\centering
\includegraphics[width=0.8\linewidth]{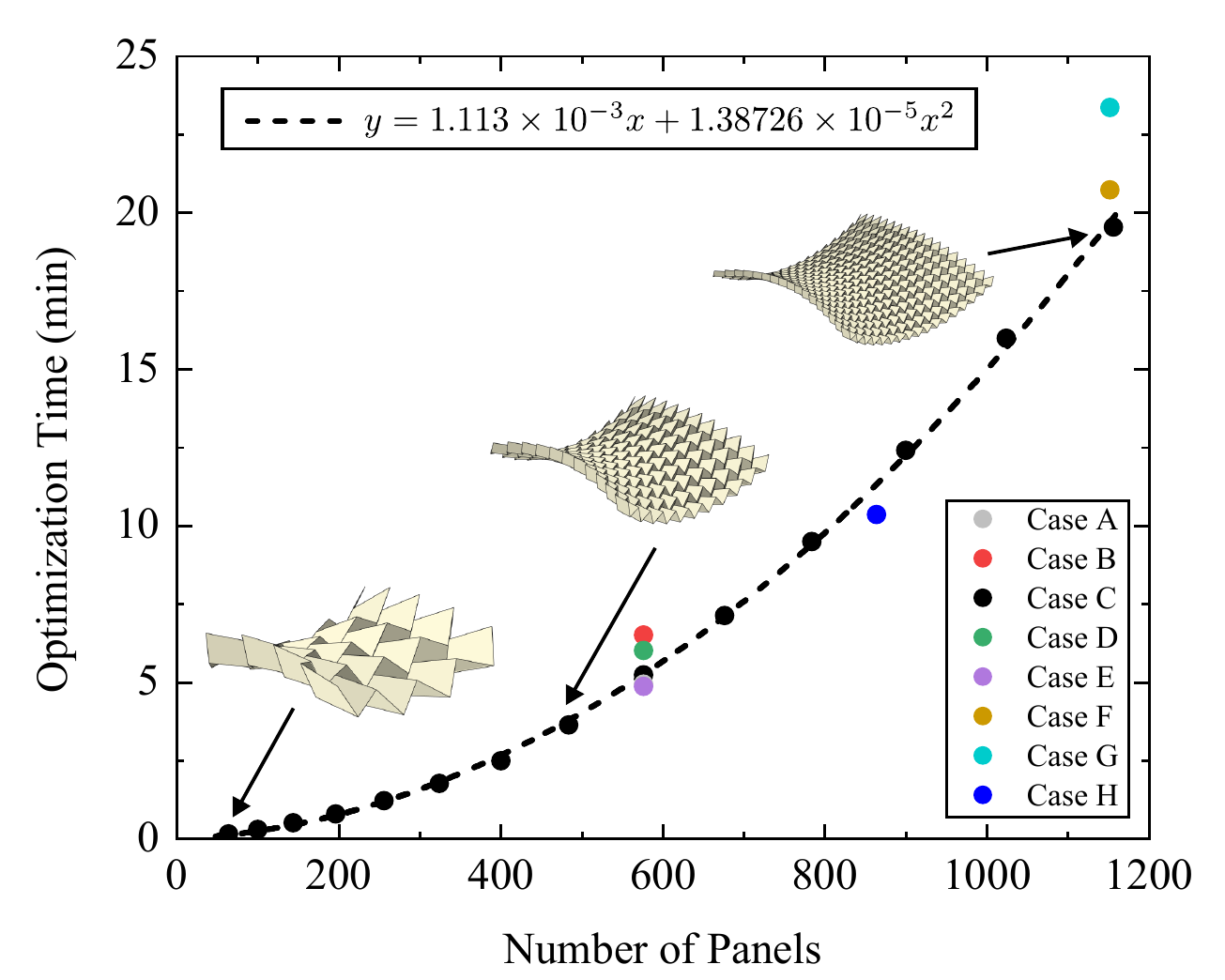}
\caption{Time consumption of the numerical cases. The data points of case C are obtained from a series of approximations of the same hyperboloid targeted surface (C in Table~\ref{tabel:eval}) with different origami size $M\times N=8^2,10^2,12^2,\cdots,34^2$. Other data points are plotted according to the computational time for the simulations for A-H in Table~\ref{tabel:eval}. The polynomial fit of the data points of case C gives a quadratic curve with the coefficient of determination $R^2=0.99972$. Points of the other cases lie near the fitting-curve, suggesting that the optimization approach we provide has $O((MN)^2)$ time complexity for various targeted surfaces.  }
\label{fig:time}
\end{figure}

\begin{table}[t!]
\centering
\begin{minipage}{0.7\textwidth}
\centering
\caption{Evaluation of Approximation}
\label{tabel:eval}
\resizebox{\columnwidth}{!}{
\begin{tabular}{llccccc}
No. &Targeted surface & Initial Origami \footnote{The notation employed to describe the initial origami crease pattern, i.e., $P_{24,24}^{\rm{mu}}\left( \pi/3,2 \pi/3\right),\ldots,$ etc., is defined explicitly in \ref{sect:ap-observations}.1.}& $\omega^{\star\star}$ & $d^{\star}$ & $d^{\star\star}$ & Time\\
\midrule
A & $\begin{aligned}&\left\{
\begin{aligned}
x & = (0.75+0.12\sin{\pi v})\cos{u}\\
y & = (0.75+0.12\sin{\pi v})\sin{u}\\
z & = 0.7v
\end{aligned}
\right.\\&~~~~u\in[0,0.5\pi],~v\in[0.2,2.0]\end{aligned}$ & 
\makecell[c]{$P_{24,24}^{\rm{mu}}\left(\pi/3,2\pi/3\right)$\\$l=w=0.1$\\$\omega=0.75\pi$} & $0.7544\pi$& $0.6224$& $0.1351$& $296.01s$\\
\midrule
B & $\begin{aligned}&\left\{
\begin{aligned}
x & = \cos{u}\cos{v}\\
y & = \sin{u}\cos{v}\\
z & = \sin{v}
\end{aligned}
\right.\\&~~~~u\in[-\pi/6,\pi/6],~v\in[-\pi/6,\pi/6]\end{aligned}$ & 
\makecell[c]{$P_{24,24}^{\rm{mu}}\left(\pi/3,2\pi/3\right)$\\$l=w=0.075$\\$\omega=0.75\pi$} & $0.7946\pi$& $0.7006$& $0.1071$& $390.75s$\\
\midrule
C & $\begin{aligned}&\left\{
\begin{aligned}
x & = -\cos{u}\sqrt{1+v^2}\\
y & = -\sin{u}\sqrt{1+v^2}\\
z & = -v
\end{aligned}
\right.\\&~~~~u\in[-\pi/6,\pi/6],~v\in[-0.5,0.5]\end{aligned}$ & 
\makecell[c]{$P_{24,24}^{\rm{mu}}\left(\pi/3,2\pi/3\right)$\\$l=w=0.075$\\$\omega=0.75\pi$} & $0.7139\pi$& $0.2714$& $0.1210$& $313.75s$\\
\midrule
D & $\begin{aligned}&\left\{
\begin{aligned}
x & = u\\
y & = v\\
z & = (1+0.12\sin{\pi u})(1+0.12\sin{\pi v})
\end{aligned}
\right.\\&~~~~u\in[0,2],~v\in[0,2]\end{aligned}$ & 
\makecell[c]{$P_{24,24}^{\rm{mu}}\left(\pi/3,2\pi/3\right)$\\$l=w=0.15$\\$\omega=0.75\pi$} & $0.7463\pi$& $0.3114$& $0.0995$& $360.85s$\\
\midrule
E & $\begin{aligned}&\left\{
\begin{aligned}
x & = u\\
y & = v\\
z & = uv
\end{aligned}
\right.\\&~~~~u\in[-0.5,0.5],~v\in[-0.5,0.5]\end{aligned}$ & 
\makecell[c]{$P_{24,24}^{\rm{mu}}\left(\pi/3,2\pi/3\right)$\\$l=w=0.075$\\$\omega=0.75\pi$} & $0.7723\pi$& $0.3863$& $0.1061$& $291.97s$\\
\midrule
F & $\begin{aligned}&\left\{
\begin{aligned}
x & = u\\
y & = -{\rm{sign}}v(\cos(|v|-\pi/4)-\sqrt{0.5})\\
z & = \sin(|v|-\pi/4)+\sqrt{0.5}
\end{aligned}
\right.\\&~~~~u\in[-0.5,0.5],~v\in[-\pi/4,\pi/4]\end{aligned}$ & 
\makecell[c]{$P_{24,48}^{\rm{pl}}\left(\pi/3,115\pi/180\right)$\\$\sigma^{0,25}=-1$\\$l=w=0.075$\\$\omega=0.75\pi$} & $0.7759\pi$& $0.5460$& $0.1142$& $1244.05s$\\
\midrule
G & $\begin{aligned}&\left\{
\begin{aligned}
x & = u\\
y & = v/2+\sqrt{3}u|v|/6\\
z & = \sqrt{3}|v|/2+uv/6
\end{aligned}
\right.\\&~~~~u\in[-0.5,0.5],~v\in[-1,1]\end{aligned}$ & 
\makecell[c]{$P_{24,48}^{\rm{pl}}\left(\pi/3,115\pi/180\right)$\\$\sigma^{0,25}=-1$\\$l=w=0.075$\\$\omega=0.75\pi$} & $0.7493\pi$& $1.6267$& $0.1194$& $1401.69s$\\
\midrule
H & \makecell[l]{Discrete points in Fig.~\ref{fig:face}(a).} & 
\makecell[c]{$P_{36,24}^{\rm{vb}}\left(\pi/2,3\pi/4\right)$\\$l=w=2/30$\\$\omega=7\pi/9$} & $0.7896\pi$& $0.6761$& $0.5390$& $621.15s$\\
\midrule
I & Same as that in case H. & 
\makecell[c]{$P_{36,24}^{\rm{vb}}\left(\pi/2,3\pi/4\right)$\\$l=w=2/30$\\$\omega=7\pi/9$} & $0.7624\pi$& $0.2407$& $0.1292$& \makecell[c]{$23729.42s$\\ \text{(on HPC)}}\\
\bottomrule
\end{tabular}}
\end{minipage}
\end{table}

\subsubsection{Surfaces with modest curvature}
With the examples in Figs.~\ref{fig:examples}{(a-e)}, we approximate a variety of smooth surfaces  by deployable origami. Each example shown is obtained by optimization, starting from the same $24 \times 24$ partially folded Miura-Ori, and following the inverse design framework above exactly.  We choose the targeted surfaces here --- a quarter vase, spherical cap, hyperboloid, 2D sinusoidal parameterization and saddle ---  to demonstrate the wide range of curvatures amenable to our methods.  Each surface is approximated with the value of $d^{\star \star} \sim 0.1$. Note, we did choose the curvatures  to be modest compared to the size of the mesh-panels and to not exhibit dramatic variations.   We elaborate more on this below when discussing the human face case.  Note also, each optimization here took $\approx 5$ minutes using the modest computational resources outlined in \ref{sect:ap-resources}.

\subsubsection{Surfaces with sharp ridges}

The examples  of smooth surfaces above arise by taking a partially folded Miura-Ori  as the initial state to the optimization.  Yet, our marching algorithm for deployable origami is general. So any base state with compatible input parameters $(\boldsymbol{\alpha}_0, \boldsymbol{l}_0, \boldsymbol{\sigma}_0)$ can be used to seed the inverse design framework. To explore this idea, we first note that there is a family of input parameters more general than those of the Miura-Ori that produce crease patterns with periodicity.  These patterns deform periodically when folding along the Miura-Ori M-V assignment.  However, it is possible to change the M-V assignment $\boldsymbol{\sigma}_0$, keeping the other parameters fixed (see \ref{sect:ap-observations}.3 and Fig.~\ref{fig:ap-deformation}(a)).  When we flip exactly one of these assignments, the marching algorithm produces exactly the same periodic crease pattern, but it folds as two planar states connected by something akin to a sharp interface at this altered assignment --- a feature reminiscent of the multi-stability of flat crease patterns explored in \cite{dieleman2020jigsaw}. 

We take advantage of this fact to explore the inverse design of targeted surfaces with sharp interfaces. With  Figs.~\ref{fig:examples}{(f) and (g)}, we consider two such examples: one connecting cylinders and the other connecting saddles.  In both cases, the initial origami to the optimization is as described above, i.e., a periodic origami with $24\times48$ mesh-panels and an altered M-V assignment along the $25$th column that produces a sharp interface in its folded states.  The origami obtained by optimization well approximate the surfaces, $d^{\star \star} \sim 0.1$, even with the sharp interfaces.  An interesting point highlighted by this result is that we are not limited by smoothness.  Structured triangulations are obtained and compared for both the targeted and origami surfaces in shape optimization.   Whether or not the targeted surface is smooth, we have the property $f_{\text{Sh.}}(\cdot) =0$ in Eq.~(\ref{eq:fSh}) if and only if the  two triangulations are the same up to rigid motion (as discussed above).  Since the triangulations serve as the fundamental proxy of the surfaces, the framework itself addresses the issue of smoothness automatically.

\subsubsection{Paper models.}

{ To verify our numerical results, we fabricated and folded the designs for the quarter vase, the saddle, and the connecting cylinders.  For each example, a flat piece of cardboard 315g art paper was perforated by a laser cutter in accordance with the crease pattern design generated by the simulation. The paper models were then folded manually in an attempt to match the desired target shape. The folded shapes are shown in Figs.~\ref{fig:examples} (h-j). One can observe that these shapes agree well with the numerical examples in Figs.~\ref{fig:examples} (a), (e), and (f), respectively.}

\begin{figure}[t!]
\includegraphics[width=0.9\linewidth]{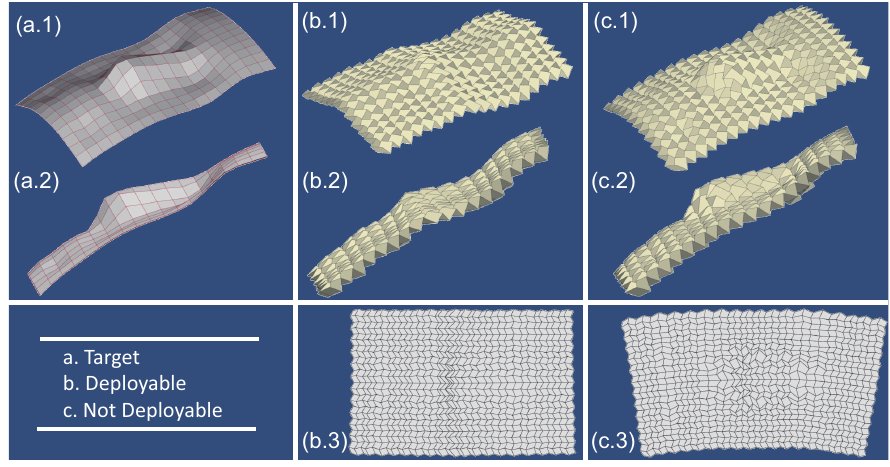}
\centering
\caption{Making a face with quad-mesh origami: (a) Targeted surface from two perspectives. (b) Attempt at making a face using RFFQM.  This origami  can be folded from the flat crease pattern to the ``face'' depicted without incurring stress in the mesh-panels.  (c) The approximation can be improved significantly using quad-mesh origami that is not constrained by deployability.  This origami is ideally  stress-free in both the flat and face state but can only be deployed by stressing the panels. The regular distributed discrete human-face points is reproduced from the source data in \cite{Thomas2018Morphable} by interpolation and transformation.}
\label{fig:face}
\end{figure}

\subsection{Challenges in deployable origami}

Ever since the pioneering work  \cite{klein2007shaping} on shape-programming with hydrogels, it has been an ambition to develop design principles for  programming material physics to actuate completely general surfaces \cite{dias2011programmed, plucinsky2016programming, van2017growth,aharoni2018universal,griniasty2019curved}.  In this domain, the challenge of ``making a face'' --- with its diverse and sharp changes in curvature --- is seen as a worthy exemplar of a general surface.  We therefore consider this case in the context of quad-mesh origami (Fig.~\ref{fig:face}).

Before turning to the results, we digress to lend perspective on an important issue. {The  state of computational origami design for artistic purposes}  is impressive \cite{lang1996computational,demaine2017origamizer}, showcasing an ability to approximate complex shapes (insects/bunnies/\ldots) far beyond anything we could ever hope to approach by our methods.   Rigorous mathematical results \cite{conti2008confining} also suggest that essentially any surface can be approximated by origami maps.  Nevertheless, both the art and the mathematics take advantage of refinement; a base pattern gets enriched with finer and finer folding mechanisms that are guaranteed to approximate a targeted shape to any degree of accuracy with sufficient refinement.  By its very nature, this strategy is a direct impediment to considerations of manufacturability or foldability.  In contrast, we are seeking to approximate a surface by an origami that is guaranteed to be deployable.  This distinction is important in engineering: Folding the crease pattern to achieve such shape is not the work of a skilled Origamist, but rather modalities like simple external loads or actuators involving motors/active materials,  which have practical limitations.

{Unfortunately, not all surfaces can be easily approximated using our optimization strategy for deployable origami.} The face we attempt to approximate is shown in Fig.~\ref{fig:face}(a).   Our first attempt using a partially folded Miura-Ori  ($36 \times 24$ mesh-panels) as the initial origami failed to produce a quality approximation during the optimization.  Our best attempt is shown in Fig.~\ref{fig:face}(b).  To obtain this approximation, we employed an initial origami exhibiting a basic deformation --- slight curvature along the short axis ---  that most resembled the global shape of the face (Table \ref{tabel:eval}).  After optimizing through our design framework, we end with a deployable origami structure that fits the global features of the face but fails to capture the sharp features at the nose.  To understand why, note the crease pattern for this origami (Fig.~\ref{fig:face}(b.3)).  The mark of being on the boundary of the compatible set of RFFQM is diminishing panel lengths or aspect ratios.  The central region of the pattern, i.e., the region  folded to approximate the nose, fits exactly this description.  So the optimization is driving the initial origami towards the boundary of the compatible set of parameters, where it gets stuck before the nose can be adeptly approximated.

\subsection{More general quad-mesh origami} 

What if we relax the condition of  flat-foldability, but otherwise maintain that the origami is folded from a flat quad-mesh crease pattern? One generally expects enhanced flexibility in the parameter space, thus a richer capability to approximate general surfaces. We have  generalized our inverse design framework to explore this question.  The basic  approach is a local-to-global characterization that yields an explicit parameterization of the origami, akin to the marching algorithm for Eq.~(\ref{eq:yij}), except with significantly more degrees-of-freedom in the angle input data.  It is possible to show that, unlike the flat-foldable case, every interior vertex in this setting has a single degree of freedom in its choice of sector angles.  If these are chosen appropriately and if appropriate input data on the left and bottom boundary of the pattern is supplied, then the entire origami structure can be determined explicitly by marching. A related approach was derived recently in \cite{dudte2020additive}.

In this more general framework, we can take our parameterization for quad-mesh origami and apply exactly the same optimization procedure above to investigate the inverse design of surfaces.   Our effort to make a face in this setting proved quite successful (Fig.~\ref{fig:face}(c)).  With the same input origami as the one supplied to the optimization in Fig.~\ref{fig:face}(b), we are able to adeptly approximate the face.  { The nose, which caused so much trouble previously, now emerges from a quite different crease pattern (Fig.~\ref{fig:face}(c.3)) that exhibits nothing of the characteristic zig-zags inherent to perturbed Miura-Ori.}  The optimization is clearly exploiting the additional degrees of freedom at interior vertices to address the sharp contrasts in curvature.

{ Fig.~\ref{fig:face} shows some challenges inherent to the constraint of deployability:} Two sharply distinctive origami  emerge from the same initial origami under the same optimization framework, except one is deployable and the other is  not.  To be clear on the latter, the origami in Fig.~\ref{fig:face}(c) is (ideally) stress-free in both the flat state and face state, but it cannot be folded without introducing stress in the panels during the process.  Whether such stresses can be overcome by the simple modalities of folding inherent to practical engineering depends on additional factors, such as the stiffness of panels and the actuation strategy. Two improvements to our ideas can be made on the deployability front: 1) All the initial origami to the optimization are essentially derivatives of the Miura-Ori.  This is not necessary.
{ Some well-designed non-Miura-Ori-like quadrilateral mesh patterns have numerous ways of folding with distinct folded configurations \cite{Liu2021Origami}.
Due to their versatility, we expect that these foldable families, when utilized as input to the optimization algorithm, can help to enlarge the design space of surfaces that can be adeptly approximated.}  2) A general characterization of the foldability of quad-meshes is provided in \cite{izmestiev2017classification}, yet the characterization is hard to ``march'' due to non-local couplings. If this difficulty can be overcome, it may be possible to give up on flat-foldability --- which is not needed in some fields --- without giving up deployability.

\section{Conclusion} 

In this paper, we demonstrate an inverse design framework that is easy to implement, efficient, and accurate for approximating targeted surfaces by deployable origami structures. Numerical examples of surfaces with modest curvature and sharp ridges are calculated to illustrate the efficiency and accuracy of our approach. A human-face case is further discussed to highlight some challenges of deployability and to demonstrate the versatility of our framework.  In the end, we expect our inverse design framework to have broad utility and be  adaptable to the many demands in engineering and architecture for functional origami structures beyond surface approximation.

\section*{Declaration of competing interest} 

The authors declare that they have no known competing financial interests or personal relationships that could have appeared to influence the work reported in this paper.

\section*{Acknowledgment} 

The authors thank Prof. Dr.~T.~Vetter, Department of Computer Science, and the University of Basel for the source of the human face data.  The authors thank the High-performance Computing Platform of Peking University for the computation resources.
This work was partly supported by the MURI program (FA9550-16-1-0566).  R.D.J. also thanks ONR (N00014-14-1-0714) and a Vannevar Bush Faculty Fellowship for partial support of this work.  X.D., H.D. and J.W. thank the National Natural Science Foundation of China (Grant
Nos.~11991033, 91848201, and 11521202) for support of this work.
{ X.D. thanks Dr. Lu Lu for assistance in laser cutting.}

\begin{appendix}
\section{Marching algorithm for deployable origami}
\label{sect:ap-marching}
We describe the marching algorithms that parameterize any origami structure $\{ \mathbf{y}^{i,j}(\boldsymbol{\alpha}_0,\boldsymbol{l}_0,\boldsymbol{\sigma}_0, \omega) | i = 0,1, \ldots, M, j = 0,1, \ldots, N\}$ obtained by folding a RFFQM.  With the exposition here, we aim for a compact description that is easy to implement numerically. We refer to our previous work \cite{feng2020designs} for a justification and detailed derivation of the formulas outlined. 

\subsection{Some preliminary definitions} Let $\theta, \varphi \in (0,\pi)$ such that $(\theta, \varphi) \neq (\pi/2, \pi/2)$.  A valid mountain-valley (M-V) assignment will be indicated by the set 
\begin{equation}
\begin{aligned}
\mathcal{MV}(\theta, \varphi) = \left\{ \begin{array}{l} -  \qquad \text{ if } \theta = \varphi \neq \pi/2 \\ +   \qquad \text{ if } \theta = \pi - \varphi \neq \pi/2  \\ \pm  \qquad  \text{ if }  \theta \neq \varphi \neq \pi - \varphi \end{array} \right\}.
\end{aligned} \label{eq:mv}
\end{equation}
For a sign $\sigma \in \mathcal{MV}(\theta, \varphi)$ indicating such an assignment, we define the folding angle functions
\begin{equation}
\begin{aligned}
&\bar{\gamma}_{\text{V}}^{\sigma}( \omega; \theta, \varphi ) = \text{sign} \Big((\sigma \cos \varphi - \cos \theta) \omega \Big) \arccos \Big( \frac{(-\sigma1 + \cos \theta \cos \varphi)\cos \omega + \sin \theta \sin \varphi }{-\sigma 1 + \cos \theta \cos \varphi + \sin \theta \sin \varphi \cos \omega } \Big), \\
&\bar{\gamma}_H^{\sigma}(\omega; \theta, \varphi) = \bar{\gamma}_V^{\sigma}(\omega; \theta, \pi - \varphi),
\end{aligned}
\end{equation}
for $\omega \in [-\pi, \pi]$ and the fold angle multipliers 
\begin{equation}
\begin{aligned}
&\mu^{\sigma}_{V} (\theta, \varphi) =  \frac{-\sigma 1 + \cos \theta \cos \varphi + \sin \theta \sin \varphi}{ \cos \varphi - \sigma \cos \theta}, \\
&\mu^{-\sigma}_{H} (\theta, \varphi) = \mu_{V}^{-\sigma} (\theta, \pi -  \varphi) .
\end{aligned}
\end{equation}
We will employ the notation $\mathbf{R}_{\mathbf{e}}(\gamma)$ for a right-hand rotation along an axis $\mathbf{e}$ (a unit vector) by an angle $\gamma$.  We will find the following matrix useful: 
\begin{equation}
\begin{aligned}
\mathbf{L}(\theta_a, \theta_b, \theta_c) =\left(\begin{array}{cc}  \tfrac{-\sin \theta_b}{\sin(\theta_a + \theta_b + \theta_c)}   &  \tfrac{\sin(\theta_a + \theta_b)}{\sin(\theta_a + \theta_b + \theta_c)} \\ 
\tfrac{\sin(\theta_a + \theta_c)}{\sin(\theta_a + \theta_b + \theta_c)} & \tfrac{-\sin \theta_c}{\sin(\theta_a + \theta_b + \theta_c)} \end{array}\right).
\end{aligned}
\end{equation}
Finally, we will always use $\mathbf{e}_{1}, \mathbf{e}_2, \mathbf{e}_3$ to denote the standard basis vectors in 3D. 

\subsection{Valid input data}   Following the notation of Fig.~\ref{fig:forward}(a), each of the sector angles on the bottom boundary must be chosen to satisfy $\alpha^{i,0}, \beta^{i,0} \in (0,\pi)$ with $(\alpha^{i,0}, \beta^{i,0}) \neq (\pi/2, \pi/2)$.  Accordingly, the M-V assignments must be chosen so that $\sigma^{i,0} \in \mathcal{MV}(\beta^{i,0}, \alpha^{i,0})$.  Finally, the lengths must be positive $w^{i,0} > 0$.  The angle and sign conditions should hold for all $i = 0, 1, \ldots, M$ and the length conditions should hold for all $i = 0,\ldots, M-1$, where $M$ is the total number of columns desired for the crease pattern.    The left boundary is similarly restricted: it is required that $\alpha^{0,j}, \beta^{0,j} \in (0,\pi)$ with $(\alpha^{0,j}, \beta^{0,j}) \neq (\pi/2, \pi/2)$ and  $\sigma^{0,j} \in \mathcal{MV}(\beta^{0,j}, \alpha^{0,j})$ for all $j = 1, \ldots, N$.  It is also required that $l^{0,j} > 0$ for all $j = 0, 1, \ldots, N-1$.  Here, $N$ is the total number of rows of the desired crease pattern.    

From hereon, we represent valid input data via an angle array $\boldsymbol{\alpha}_0$, a length array $\boldsymbol{l}_0$, and a sign array $\boldsymbol{\sigma}_0$ whose components are constrained as above.  Note, for an $M \times N$ crease pattern, the angle array has $2(M + 1) + 2N$ components, the length array has $M + N$ components, and the sign array has $M + N + 1$ components. 

\subsection{Marching to obtain the sector angles, lengths and M-V assignments}  Let $(\boldsymbol{\alpha}_0,\boldsymbol{l}_0,\boldsymbol{\sigma}_0)$ be valid input data, and suppose we have marched to the $(i,j)$-index, $i, j > 0$, without incompatibility (defined below).  Then the angles $\alpha^{i-1,j-1}$,$\beta^{i-1,j-1}$, $\alpha^{i,j-1}$,$\beta^{i,j-1}$, $\alpha^{i-1,j}$,$\beta^{i-1,j}$, M-V assignments $\sigma^{i-1,j-1}, \sigma^{i-1,j} ,\sigma^{i,j-1}$, and lengths $w^{i-1,j-1}, l^{i-1,j-1}$ are prescribed and valid.  Set
\begin{equation}
\begin{aligned}
\mu^{i,j} = \mu_H^{-\sigma^{i-1,j-1}}(\beta^{i-1,j-1} , \alpha^{i-1,j-1} ) \mu_V^{\sigma^{i-1,j}}( \pi - \alpha^{i-1,j}, \pi - \beta^{i-1,j}) \mu_V^{\sigma^{i,j-1}}( \alpha^{i,j-1}, \beta^{i,j-1}).
\end{aligned}
\end{equation}
Check the following conditions of compatibility 
\begin{equation}
\begin{aligned}
(\text{Compatibility at the (i,j)-vertex:}) \quad \begin{cases}
\beta^{i-1,j-1} + \alpha^{i,j-1} - \alpha^{i-1,j}  \in (0, \pi), \\
|\mu^{i,j}| \neq 1, \\ 
\mathbf{L}(\beta^{i-1,j-1}, \pi - \alpha^{i-1,j}, \alpha^{i,j-1})\left( \begin{array}{c} l^{i-1,j-1}  \\ w^{i-1,j-1} \end{array}\right) > \mathbf{0}.
\end{cases}
\end{aligned} \label{eq:compatibility}
\end{equation}
If all these conditions hold, set 
\begin{equation}
\begin{aligned}
&\beta^{i,j} = \beta^{i-1,j-1} + \alpha^{i,j-1} - \alpha^{i-1,j}, \quad \sigma^{i,j}1 = - \text{sign} ((\mu^{i,j})^2 - 1), \\
&\alpha^{i,j}= \arccos \Big(\sigma^{i,j} \frac{2 \mu^{i,j} + ((\mu^{i,j})^2 + 1) \cos (\beta^{i,j}) ) }{2 \mu^{i,j}  \cos (\beta^{i,j}) +  ((\mu^{i,j})^2 + 1)  } \Big), \\
&\left(\begin{array}{c} l^{i,j-1} \\ w^{i-1,j} \end{array} \right) = \mathbf{L}(\beta^{i-1,j-1}, \pi - \alpha^{i-1,j}, \alpha^{i,j-1})\left( \begin{array}{c} l^{i-1,j-1}  \\ w^{i-1,j-1} \end{array}\right).
\end{aligned} \label{eq:marching}
\end{equation}
Alternatively, if any one of the compatible conditions fails, then the input data $(\boldsymbol{\alpha}_0,\boldsymbol{l}_0,\boldsymbol{\sigma}_0)$ is not compatible with RFFQM and the marching algorithm cannot continue.

\subsection{Marching to obtain the flat crease pattern}  Assume $(\boldsymbol{\alpha}_0,\boldsymbol{l}_0,\boldsymbol{\sigma}_0)$ is compatible, so that the previous marching algorithm populated all sector angles, lengths and M-V assignments associated with the crease pattern.   We now compute the vertices associated with the flat crease pattern $\{ \mathbf{x}^{i,j}(\boldsymbol{\alpha}_0,\boldsymbol{l}_0,\boldsymbol{\sigma}_0) | i = 0,1, \ldots, M, j = 0,1, \ldots, N\}$.  For the first panel, we set
\begin{equation}
\begin{aligned}
\mathbf{x}^{0,0} = \mathbf{0}, \quad  \mathbf{x}^{1,0} = w^{0,0} \mathbf{e}_1, \quad \mathbf{x}^{0,1} =  l^{0,0} \mathbf{R}_{\mathbf{e}_3}(\beta^{0,0}) \mathbf{e}_1, \quad \mathbf{x}^{1,1} = \mathbf{x}^{1,0} - l^{1,0} \mathbf{R}_{\mathbf{e}_3}(-\alpha^{1,0}) \mathbf{e}_1. 
\end{aligned}
\end{equation} 
For the first row with $i > 1$, we set 
\begin{equation}
\begin{aligned}
\mathbf{x}^{i,0} = \mathbf{x}^{i-1,0} +\tfrac{w^{i-1,0}}{l^{i-1,0}} \mathbf{R}_{\mathbf{e}_3}(- \beta^{i-1,0}) (\mathbf{x}^{i-1,1} - \mathbf{x}^{i-1,0}), \quad \mathbf{x}^{i,1} = \mathbf{x}^{i,0} + \tfrac{l^{i,0}}{w^{i-1,0}} \mathbf{R}_{\mathbf{e}_3}(- \alpha^{i,0})(\mathbf{x}^{i-1,0} -  \mathbf{x}^{i,0}).
\end{aligned}
\end{equation}
For everything else, i.e., $j > 1$, we set
\begin{equation}
\begin{aligned}
\mathbf{x}^{i,j} = \mathbf{x}^{i,j-1} +\tfrac{l^{i,j-1}}{l^{i,j-2}} \mathbf{R}_{\mathbf{e}_3}(\pi - \alpha^{i,j-1} + \beta^{i,j-1}) (\mathbf{x}^{i,j-2} - \mathbf{x}^{i,j-1}).
\end{aligned}
\end{equation}

\subsection{Marching to obtain the folding angles} Assume $(\boldsymbol{\alpha}_0,\boldsymbol{l}_0,\boldsymbol{\sigma}_0)$ is compatible and take the quantities in the marching algorithms in C-D as given.  Fix a folding parameter $\omega \in [0,\pi]$.  We define the folding angles at horizontal and vertical creases, respectively, as 
\begin{equation}
\begin{aligned}
\gamma_{H}^{i,j} = \begin{cases}
-\sigma^{i,j} \gamma_{H}^{i-1,j} & \text{ if } i \neq 0,\\
\sigma^{i,j} \bar{\gamma}_{H}^{-\sigma^{i,j}} (\gamma_V^{i,j-1}; \beta^{i,j}, \alpha^{i,j}) & \text{ if } i = 0, j \neq 0, \\
\sigma^{i,j} \bar{\gamma}_{H}^{-\sigma^{i,j}} (\omega; \beta^{i,j}, \alpha^{i,j}) & \text{ if } i = j = 0, 
\end{cases} \qquad \gamma_{V}^{i,j} = \bar{\gamma}_V^{\sigma^{i,j}}(\gamma_H^{i,j}; \beta^{i,j}, \alpha^{i,j}).
\end{aligned} \label{eq:folding_angle}
\end{equation}

\subsection{Marching to obtain the deformation gradients} Assume $(\boldsymbol{\alpha}_0,\boldsymbol{l}_0,\boldsymbol{\sigma}_0)$ is compatible and take the quantities in the marching algorithms in C-E as given.  Assume $i < M, j < N$.  Set $\mathbf{t}^{i,j} = \tfrac{\mathbf{x}^{i,j+1} - \mathbf{x}^{i,j}}{|\mathbf{x}^{i,j+1} - \mathbf{x}^{i,j}|}$ and $\mathbf{s}^{i,j} = \tfrac{\mathbf{x}^{i+1,j} - \mathbf{x}^{i,j}}{|\mathbf{x}^{i+1,j} - \mathbf{x}^{i,j}|}$.  Then  set
\begin{equation}
\begin{aligned}
\mathbf{F}^{i,j}  = \begin{cases}
\mathbf{F}^{i-1,j} \mathbf{R}_{\mathbf{t}^{i,j}}(\gamma_{V}^{i,j}) & \text{ if }  i \neq 0, \\
\mathbf{F}^{i,j-1} \mathbf{R}_{\mathbf{s}^{i,j}}(-\gamma_{H}^{i,j}) & \text{ if } i =0 , j \neq 0, \\
\mathbf{I} & \text{ if } i = j = 0.
\end{cases}
\end{aligned} \label{eq:deformation_gradient}
\end{equation}

\subsection{Marching to obtain the origami structure} Assume $(\boldsymbol{\alpha}_0,\boldsymbol{l}_0,\boldsymbol{\sigma}_0)$ is compatible and take the quantities in the marching algorithms in C-F as given. We finally compute the vertices associated with the origami $\{ \mathbf{y}^{i,j}(\boldsymbol{\alpha}_0,\boldsymbol{l}_0,\boldsymbol{\sigma}_0, \omega) | i = 0,1, \ldots, M, j = 0,1, \ldots, N\}$.  These vertices are given by 
\begin{equation}
\begin{aligned}
\mathbf{y}^{i,j} =\begin{cases}
\mathbf{y}^{i-1,j} + \mathbf{F}^{i-1,j}(\mathbf{x}^{i,j} - \mathbf{x}^{i-1,j})  & \text{ if } i \neq 0, j < N, \\
\mathbf{y}^{i-1,j} +   \mathbf{F}^{i-1,j-1}(\mathbf{x}^{i,j} - \mathbf{x}^{i-1,j})  & \text{ if } i \neq 0, j = N, \\
\mathbf{y}^{i,j-1} + \mathbf{F}^{i,j-1}(\mathbf{x}^{i,j} - \mathbf{x}^{i,j-1}) & \text{ if } i = 0, j \neq 0, \\
\mathbf{x}^{i,j} & \text{ if } i = j = 0. 
\end{cases}
\end{aligned} \label{eq:deformation_function}
\end{equation}

\subsection{Lemma for the compatible input data}
\begin{lemma}
Let $(\bar{\boldsymbol{\alpha}}_0, \bar{\boldsymbol{l}}_0, \bar{\boldsymbol{\sigma}}_0)$ be compatible input data.   Then the following two results hold when fixing the M-V assignment indicated by the array of signs $\bar{\boldsymbol{\sigma}}_0$:
\begin{itemize}[leftmargin=*]
\item  There is an open neighborhood $\mathcal{N}(\bar{\boldsymbol{\alpha}}_0, \bar{\boldsymbol{l}}_0) \subset \mathbb{R}^{2(M+N+1)} \times \mathbb{R}^{M+N}$ of the point $(\bar{\boldsymbol{\alpha}}_0, \bar{\boldsymbol{l}}_0)$ for which $(\boldsymbol{\alpha}_0, \boldsymbol{l}_0, \bar{\boldsymbol{\sigma}}_0)$ is compatible input data for all $(\boldsymbol{\alpha}_0, \boldsymbol{l}_0) \in \mathcal{N}(\bar{\boldsymbol{\alpha}}_0, \bar{\boldsymbol{l}}_0)$.
\item The parameterization $\mathbf{y}^{i,j}_{\boldsymbol{\sigma}^{\star}_0} (\boldsymbol{\alpha}_0, \boldsymbol{l}_0, \omega) =  \mathbf{y}^{i,j}(\boldsymbol{\alpha}_0, \boldsymbol{l}_0, \bar{\boldsymbol{\sigma}}_0, \omega)$  is  smooth on $\mathcal{N}(\bar{\boldsymbol{\alpha}}_0, \bar{\boldsymbol{l}}_0)  \times (0,\pi)$ for every $(i,j)$-vertex of the $M \times N$ origami pattern.
\end{itemize}
\end{lemma}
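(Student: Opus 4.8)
The plan is to induct following the marching order, leaning on a single structural observation: every condition in the compatibility check Eq.~(\ref{eq:compatibility}) is an \emph{open} condition, and every formula produced by the marching (Eqs.~(\ref{eq:marching}), (\ref{eq:folding_angle}), (\ref{eq:deformation_gradient}), (\ref{eq:deformation_function}), and the flat-pattern recursions) is a composition of elementary smooth maps whose only points of non-smoothness are \emph{exactly} those ruled out by compatibility. A preliminary remark makes the boundary data fit this pattern: for a \emph{fixed} sign $\sigma$, the membership $\sigma \in \mathcal{MV}(\theta,\varphi)$ in Eq.~(\ref{eq:mv}) is equivalent to $\theta \neq \varphi$ (if $\sigma = +$) or $\theta \neq \pi - \varphi$ (if $\sigma = -$), each of which cuts out an open set of angles. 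Hence the full set of boundary conditions---angles strictly in $(0,\pi)$, lengths strictly positive, and the boundary signs valid---defines an open neighborhood of $(\bar{\boldsymbol{\alpha}}_0, \bar{\boldsymbol{l}}_0)$ on which to start.

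For the first claim (openness), I would suppose inductively that up to the $(i,j)$-vertex all previously computed angles, lengths, and signs are well defined and depend continuously on $(\boldsymbol{\alpha}_0, \boldsymbol{l}_0)$ on some open neighborhood, with all previously computed signs equal to $\bar{\boldsymbol{\sigma}}_0$ there. Then $\mu^{i,j}$ is continuous in these data, so the three conditions in Eq.~(\ref{eq:compatibility})---an interval membership, the condition $|\mu^{i,j}| \neq 1$, and a pair of strict inequalities---each define open sets whose intersection with the inductive neighborhood is again an open neighborhood of $(\bar{\boldsymbol{\alpha}}_0, \bar{\boldsymbol{l}}_0)$ on which the marching proceeds. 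The decisive point for ``fixing the M-V assignment'' is that the newly produced sign $\sigma^{i,j} = -\,\mathrm{sign}\big((\mu^{i,j})^2 - 1\big)$ is locally \emph{constant}, precisely because $|\mu^{i,j}| \neq 1$, so it stays equal to $\bar{\sigma}^{i,j}$. Intersecting the finitely many neighborhoods arising from the $O(MN)$ vertices yields the desired $\mathcal{N}(\bar{\boldsymbol{\alpha}}_0, \bar{\boldsymbol{l}}_0)$.

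For the second claim (smoothness) I would upgrade continuity to smoothness on $\mathcal{N}(\bar{\boldsymbol{\alpha}}_0, \bar{\boldsymbol{l}}_0) \times (0,\pi)$ vertex-by-vertex, checking each elementary map is evaluated off its singular set. The $\arccos$ in Eq.~(\ref{eq:marching}) is smooth since its argument lies strictly in $(-1,1)$ (equivalently $\alpha^{i,j} \in (0,\pi)$ strictly, as holds for the genuine non-degenerate base pattern and hence nearby); the entries of $\mathbf{L}$ are smooth because the first compatibility condition forces $\beta^{i-1,j-1} + (\pi - \alpha^{i-1,j}) + \alpha^{i,j-1} = \pi + \beta^{i,j} \in (\pi, 2\pi)$, so $\sin(\theta_a+\theta_b+\theta_c) \neq 0$; and the length recursions are smooth since all edge lengths are strictly positive. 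The flat-pattern vertices $\mathbf{x}^{i,j}$ are then smooth (rotations composed with smooth angle/length data, with nonzero length ratios), the normalized axes $\mathbf{t}^{i,j}, \mathbf{s}^{i,j}$ are smooth because they divide by nonvanishing edge lengths, the deformation gradients $\mathbf{F}^{i,j}$ in Eq.~(\ref{eq:deformation_gradient}) are finite products of rotations $\mathbf{R}_{\mathbf{e}}(\gamma)$ smooth jointly in axis and angle, and $\mathbf{y}^{i,j}$ in Eq.~(\ref{eq:deformation_function}) is a smooth combination of these. Thus smoothness of $\mathbf{y}^{i,j}$ reduces to smoothness of the folding angles.

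The one genuinely delicate step---and the expected main obstacle---is the smoothness of the folding-angle functions $\bar{\gamma}_V^{\sigma}, \bar{\gamma}_H^{\sigma}$ in Eq.~(\ref{eq:folding_angle}), because of the factor $\mathrm{sign}\big((\sigma\cos\varphi - \cos\theta)\,\omega\big)$. I would show this factor is locally constant on the whole domain. First, a short computation gives that $\sigma\cos\varphi - \cos\theta = 0$ is equivalent to $\theta = \varphi$ when $\sigma = +$ and to $\theta = \pi - \varphi$ when $\sigma = -$; by the preliminary remark these are \emph{exactly} the cases excluded by the open, fixed condition $\sigma^{i,j} \in \mathcal{MV}(\beta^{i,j}, \alpha^{i,j})$, so $\sigma\cos\varphi - \cos\theta$ keeps a constant nonzero sign on $\mathcal{N}(\bar{\boldsymbol{\alpha}}_0, \bar{\boldsymbol{l}}_0)$. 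Second, at interior creases the argument fed into $\bar{\gamma}_V$ is a neighboring folding angle rather than $\omega$ itself; restricting to $\omega \in (0,\pi)$ (so $\omega \neq 0$) and invoking the single-DOF monotone folding motion established in \cite{feng2020designs}---on which every folding angle is strictly between its flat value $0$ and its folded-flat value, hence nonzero for $\omega \in (0,\pi)$---keeps every $\mathrm{sign}(\cdot)$ argument away from $0$. With both sign factors locally constant and every $\arccos$ argument strictly inside $(-1,1)$, each $\bar{\gamma}$ is smooth, the recursion in Eq.~(\ref{eq:folding_angle}) preserves smoothness, and composing through $\mathbf{F}^{i,j}$ and $\mathbf{y}^{i,j}$ completes the argument.
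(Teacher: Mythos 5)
Your proposal is correct and follows essentially the same route as the paper's proof: openness of the compatible set via continuity of the marching maps together with the open nature of each compatibility condition, and smoothness of $\mathbf{y}^{i,j}$ by composing the smooth marching formulas off their singular sets. You supply more detail than the paper does at the genuinely delicate points --- the local constancy of $\sigma^{i,j}$ from $|\mu^{i,j}|\neq 1$, and the nonvanishing of the arguments of the $\mathrm{sign}$ factors in the folding-angle functions --- which the paper's proof only asserts implicitly.
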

\begin{proof}
For the fixed M-V assignment $\bar{\boldsymbol{\sigma}}_0$, the components of the given compatible input data $\bar{\boldsymbol{\alpha}}_0$ and $\bar{\boldsymbol{l}}_0$ are the boundary sector angle pairs $(\bar{\alpha}^{0,j}, \bar{\beta}^{0,j})$, $(\bar{\alpha}^{i,0}, \bar{\beta}^{i,0})$ and the boundary side lengths $\{\bar{w}^{0,j}, \bar{l}^{0,j}, \bar{w}^{i,0}, \bar{l}^{i,0}\}$, which are in the open sets $((0, \pi)\times(0,\pi) \setminus (\pi/2, \pi/2))$ and $(0, +\infty)$ respectively. To prove the existence of compatible $\mathcal{N}(\bar{\boldsymbol{\alpha}}_0, \bar{\boldsymbol{l}}_0)$, we firstly notice that the M-V assignment $\bar{\boldsymbol{\sigma}}_0$ is valid in a small enough open neighborhood of $\bar{\boldsymbol{\alpha}}_0$ by satisfying the validity condition Eq.~(\ref{eq:mv}), because the marching function for $\sigma^{i,j}$ in Eq.~(\ref{eq:marching}) is continuous when the given pattern generated by $(\bar{\boldsymbol{\alpha}}_0, \bar{\boldsymbol{l}}_0)$ is compatible. We also notice that the marching functions for the sector angles $\alpha^{i,j}, \beta^{i,j}$ and the side lengths $w^{i,j}, l^{i,j}$ are  continuous functions which map  open sets to open sets. If the  open set $\mathcal{N}(\bar{\boldsymbol{\alpha}}_0, \bar{\boldsymbol{l}}_0)$ is small enough, the marching functions will ensure the validity of the vertex compatibility Eq.~(\ref{eq:compatibility}) according to the compatibility's openness, and therefore ensure that the  sector angles and the side lengths over the pattern  are all compatible. Thus, there exists a small enough open neighborhood  $\mathcal{N}(\bar{\boldsymbol{\alpha}}_0, \bar{\boldsymbol{l}}_0)$ of the point $(\bar{\boldsymbol{\alpha}}_0, \bar{\boldsymbol{l}}_0)$ such that all the points in $\mathcal{N}(\bar{\boldsymbol{\alpha}}_0, \bar{\boldsymbol{l}}_0)$ result in valid RFFQM patterns. Next, we prove the vertex position function $\mathbf{y}^{i,j}(\boldsymbol{\alpha}_0, \boldsymbol{l}_0, \bar{\boldsymbol{\sigma}}_0, \omega)$ is smooth for $(\boldsymbol{\alpha}_0, \boldsymbol{l}_0) \in \mathcal{N}(\bar{\boldsymbol{\alpha}}_0, \bar{\boldsymbol{l}}_0)$ and for $\omega \in (0, \pi)$. For the $(i,j)$ vertex, the deformation Eq.~(\ref{eq:deformation_function}), calculated from the folding angle function Eq.~(\ref{eq:folding_angle}),  the deformation gradient Eq.~(\ref{eq:deformation_gradient}) and the previous vertex positions, is smooth for $\omega \in (0, \pi)$ and $(\boldsymbol{\alpha}_0, \boldsymbol{l}_0) \in \mathcal{N}(\bar{\boldsymbol{\alpha}}_0, \bar{\boldsymbol{l}}_0)$ , because  Eq.~(\ref{eq:folding_angle}) and  Eq.~(\ref{eq:deformation_gradient}) are valid and smooth for compatible input data $(\boldsymbol{\alpha}_0, \boldsymbol{l}_0)$ (and its small enough open neighborhood). This completes the proof.
\end{proof}

\section{Details of the optimization procedure}
\label{sect:ap-procedure}
\subsection{Procedure for discretization} A partially-folded Miura-Ori is given by input data $(\boldsymbol{\alpha}_{\text{M-O}}, \boldsymbol{l}_{\text{M-O}}, \boldsymbol{\sigma}_{\text{M-O}})$ in Eq.~(\ref{eq:inputData}) and a folding parameter  $0< \omega_{\text{M-O}} < \pi$.  For notational convenience, we assume that $M$ (the number of columns of panels) and $N$ (the number of rows) of the origami are both even.  Then, the collection of vertices with this input data (Eq.~(\ref{eq:rMO})) provides a discretization of a planar region corresponding to two identical rectangular lattices shifted by an offset.  (Here, $ M/2$ is even by assumption.) This offset lattice has a rectangular unit cell of side length $W$ parallel to a unit vector $\mathbf{r}_u$ and side length $L$ parallel to a unit vector $\mathbf{r}_v$ (orthogonal to $\mathbf{r}_u$). It also has an identical rectangular unit cell shifted by a vector  $\delta_u W \mathbf{r}_u + \delta_v L \mathbf{r}_v$. These quantities are all explicit functions of the vertices defined above, i.e., 
\begin{equation}
\begin{aligned}\label{eq:LWDefine}
&W = |\mathbf{r}_{\text{M-O}}^{1,0} - \mathbf{r}_{\text{M-O}}^{0,0}|, && L = |\mathbf{r}_{\text{M-O}}^{0,2} - \mathbf{r}_{\text{M-O}}^{0,0}|, \\
&\mathbf{r}_u = \frac{1}{W}( \mathbf{r}_{\text{M-O}}^{1,0} - \mathbf{r}_{\text{M-O}}^{0,0}), && \mathbf{r}_v = \frac{1}{L} (\mathbf{r}_{\text{M-O}}^{0,2} - \mathbf{r}_{\text{M-O}}^{0,0}) , \\
&\delta_u =  \frac{\mathbf{r}_{\text{M-O}}^{0,1} - \mathbf{r}_{\text{M-O}}^{0,0}}{|\mathbf{r}_{\text{M-O}}^{0,1} - \mathbf{r}_{\text{M-O}}^{0,0}|} \cdot \mathbf{r}_u, &&  \delta_v  =  \frac{\mathbf{r}_{\text{M-O}}^{0,1} - \mathbf{r}_{\text{M-O}}^{0,0}}{|\mathbf{r}_{\text{M-O}}^{0,1} - \mathbf{r}_{\text{M-O}}^{0,0}|} \cdot \mathbf{r}_v.
\end{aligned}
\end{equation}

\begin{figure}[t!]
\centering
\includegraphics[width=0.9\linewidth]{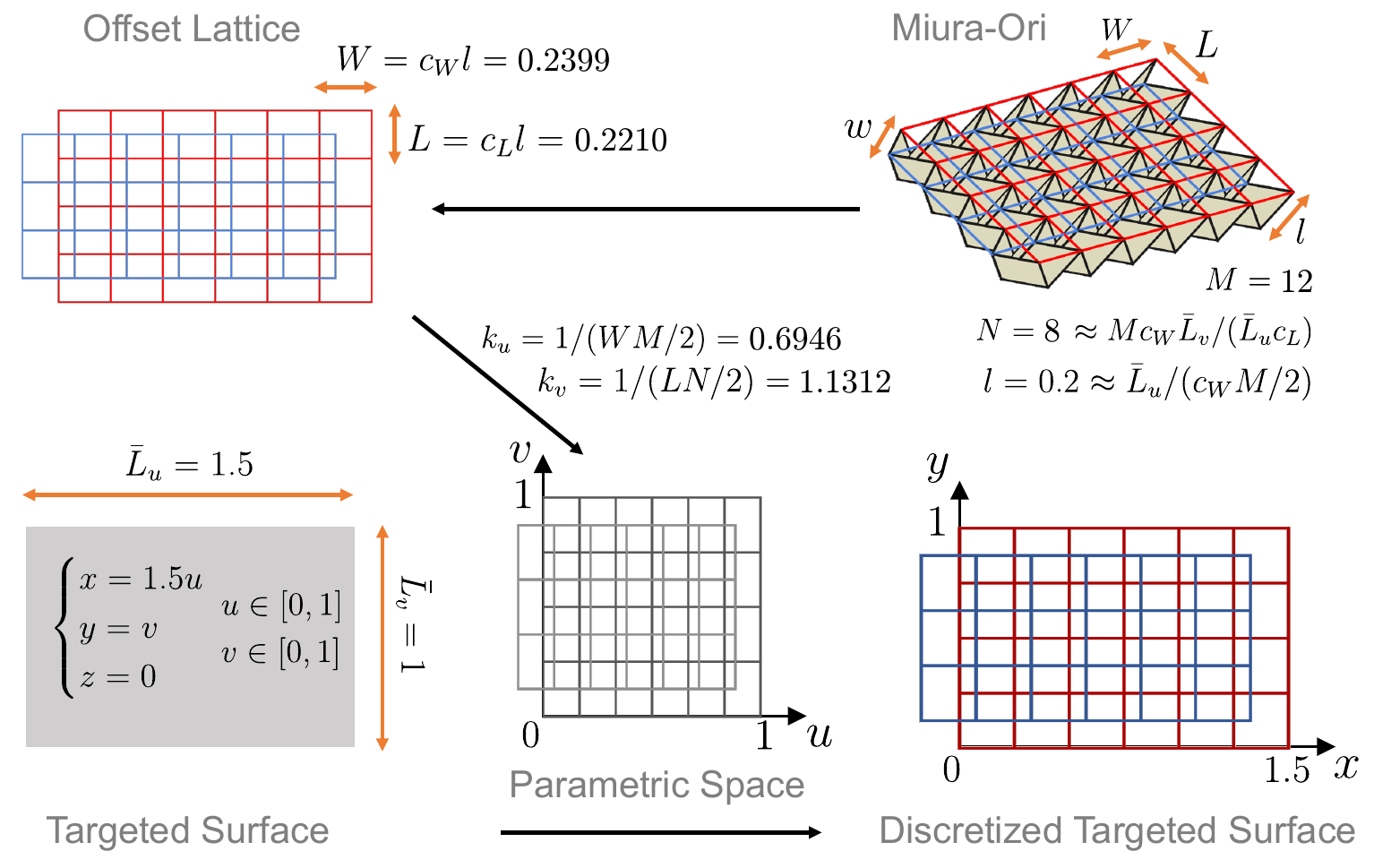}	
\caption{Projection of the Miura-Ori offset lattice onto the parametric space of a planar targeted surface. The parameters of the Miura-Ori $l, w, L, W, M, N, \alpha, \omega_{\text{M-O}}$ are chosen to match the size $(\bar{L}_u, \bar{L}_v)$ of the targeted surface. The offset lattice is obtained by projecting the Miura-Ori onto the 2D plane. By the rescaling factors $k_u$ and $k_v$, the parametric space is matched to the offset lattice, and then discretized accordingly. See text for details. } 
\label{fig:ap-scaling}
\end{figure}

We discretize the targeted surface so as to match an offset lattice associated to a partially-folded Miura-Ori. For the targeted surface, we consider an arbitrary regular parametric surface $\bar{\bfr}(u,v)$  of a rectangular domain:
\beq\label{eq:targetParam}
\bar{\bfr}(u,v)=(x(u,v),y(u,v),z(u,v)), \quad u \in [u^L, u^R],  \quad v \in [v^L, v^R]
\eeq
where $u^L < u^R$ and $v^L < v^R$. For the discretization based on the offset lattice, we treat the general case here (though we use a specific offset lattice in calculations for simplicity). In this setting, we prescribe $w = a_r l$  for an aspect ratio $a_r > 0$.  As a result, every side-length of the Miura-Ori is proportional to $l$ in this prescription.  Elasticity scaling therefore dictates that the side lengths of the offset lattice in Eq.~(\ref{eq:LWDefine}) satisfy $W = c_W l$ and $L = c_L l$ for positive numbers $c_W = c_W(\alpha, \omega_{\text{M-O}}, a_r)$ and $c_L = c_L(\alpha, \omega_{\text{M-O}}, a_r)$. We  henceforth treat the quantities $\alpha, \omega_{\text{M-O}}$ and  $a_r$ as given (design parameters one can freely choose).   Since the Miura-Ori has $M$ columns and $N$ rows of panels (both even numbers), the red lattice on the Miura-Ori (Fig.~\ref{fig:ap-scaling}) has a width $WM/2$ and length $LN/2$. As we already know the equations of the targeted surface, the characteristic lengths $\bar{L}_u$ and $\bar{L}_v$ can be estimated to represent the total size of the targeted surface along the $u$ and $v$ directions, respectively. We therefore choose the parameters $M, N, l$ such that $c_W l M/2   \approx \bar{L}_u$ and $c_L l  N/2  \approx \bar{L}_v$.   Precisely we fix an even integer $M$ to dictate the total number of panels desired for the optimization.  Then we choose an even integer $N$ that best approximates the ratio $\frac{c_W M/2 }{c_LN/2 } \approx \frac{\bar{L}_u}{\bar{L}_v}$. Finally, we set $l$ such that $c_W l  M/2   \approx \bar{L}_u$.

With all the parameters set, we project the Miura-Ori offset lattice in Eq.~(\ref{eq:rMO}) onto a plane by the formulas
\begin{equation}
\begin{aligned}
\left[\begin{array}{c} u^{i,j} \\  v^{i,j} \end{array}\right] =\left[ \begin{array}{c} k_u\mathbf{r}_{\text{M-O}}^{i,j} \cdot \mathbf{r}_{u} + c_u \\  k_v\mathbf{r}_{\text{M-O}}^{i,j} \cdot \mathbf{r}_{v} + c_v \end{array}\right],
\end{aligned}
\end{equation}
choosing the scaling $(k_u,k_v)$ and translation $(c_u, c_v)$ so that the center of the rectangular region $[u^L, u^R] \times [v^L, v^R]$ and the center of the projected offset lattice match.  In other words, the average lattices $\langle (u^{i.j}, v^{i,j}) \rangle$ should coincide with the center of the rectangular region.  We therefore obtain the formulas for this transformation
\begin{equation}
\begin{aligned}
\left[\begin{array}{c} k_u \\ k_v \end{array}\right]= \left[\begin{array}{c} (u^R-u^L)/(W M/2) \\ (v^R-v^L)/(L N/2) \end{array}\right],
\end{aligned}
\end{equation}

\begin{equation}
\begin{aligned}
\left[\begin{array}{c} c_u \\ c_v \end{array}\right]=  \left[\begin{array}{c} u^L - k_u\mathbf{r}_{\text{M-O}}^{0,0} \cdot \mathbf{r}_{u} \\ v^L - k_v\mathbf{r}_{\text{M-O}}^{0,0} \cdot \mathbf{r}_{v} \end{array}\right].
\end{aligned}
\end{equation}
The projection process of a targeted plane is sketched in Fig.~\ref{fig:ap-scaling}. Note, by following the procedure above exactly, some boundary points such as $(u^{0,2j}, v^{0,2j})$ would exceed the given domain $[u^L, u^R] \times [v^L, v^R]$ due the offset. This is not an issue because either the domain of parametric equation can be enlarged slightly, or we can decrease slightly the scaling coefficients $k_u$ and $k_v$ to include all the points in the given domain.  The latter can always be done. 

\subsection{Some remarks on the optimization procedure}

\begin{figure}[t!]
\centering
\includegraphics[width=0.9\linewidth]{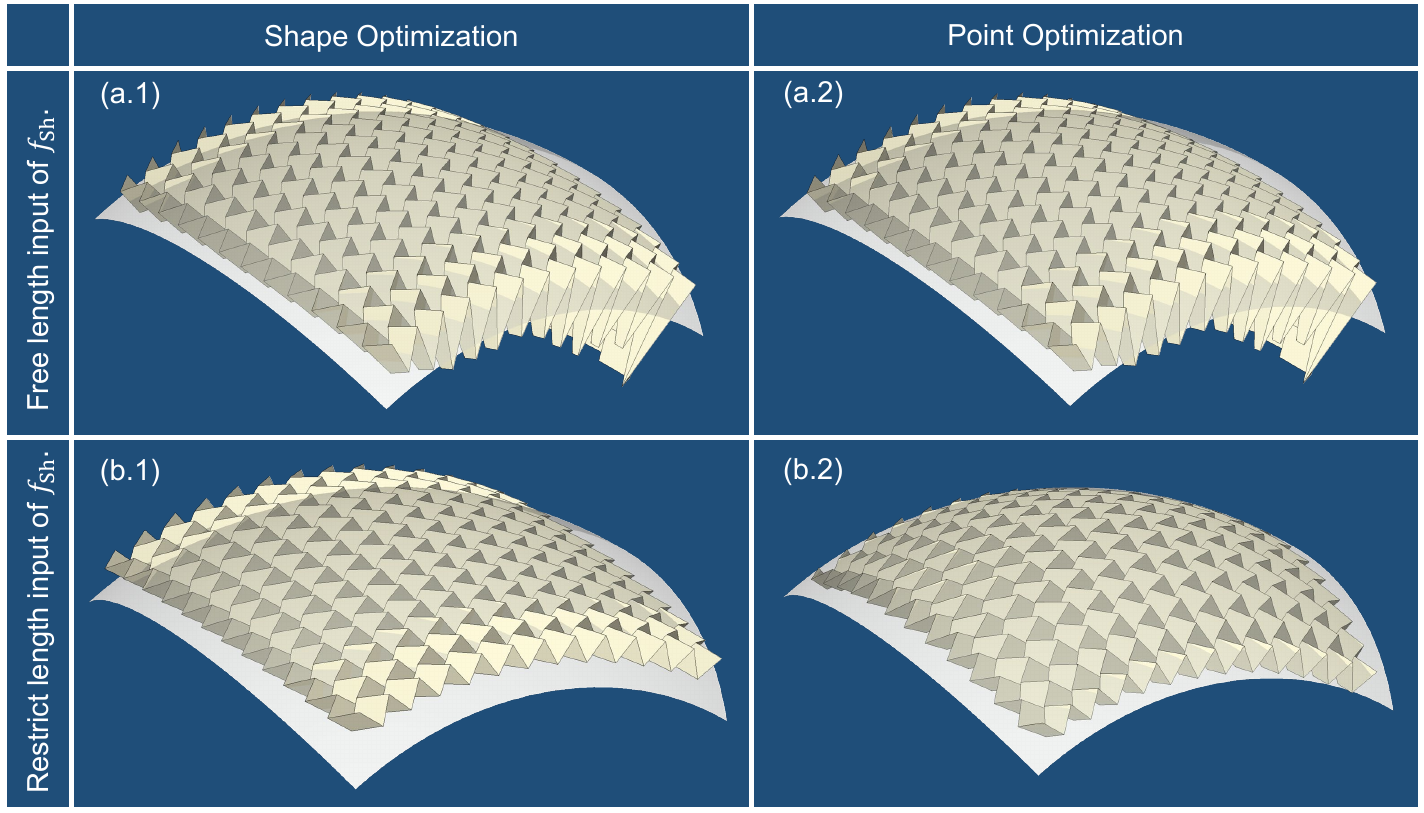}	
\caption{Results of spherical cap approximation: (a) When the boundary length input data of $f_{\rm{Sh.}}$ is optimized freely, this leads to distorted aspect ratios. In addition, point optimization exits quickly with negligible refinements. (b) When the boundary length input data of $f_{\rm{Sh.}}$ is restricted, i.e., $\boldsymbol{l}_0 = l_0 \mathbf{1}$, this leads to regular aspect ratios and satisfying refinements. Note that the boundary length input data of $f_{\rm{Pt.}}$ is optimized freely in both (a) and (b).}
\label{fig:ap-distort}
\end{figure}

We develop a two-stage optimization procedure that produces two deployable origami structures,
\begin{equation}
\begin{aligned}
\text{after shape optimization:} \quad &\big\{ \mathbf{y}^{i,j}(\boldsymbol{\alpha}_0^{\star}, l_0^{\star}  \boldsymbol{1}, \boldsymbol{\sigma}_{\text{M-O}}, \omega^{\star})  \big| i = 0,1,\ldots, M, j= 0,1,\ldots, N\big\}, \\
\text{after point optimization:} \quad &\big\{ \mathbf{y}^{i,j}(\boldsymbol{\alpha}_0^{\star \star}, \boldsymbol{l}_0^{\star \star}, \boldsymbol{\sigma}_{\text{M-O}}, \omega^{\star \star})  \big| i = 0,1,\ldots, M, j= 0,1,\ldots, N\big\}.
\end{aligned}
\end{equation}
When optimizing for shape to obtain the first deployable origami structure, we restrict the boundary length input data $l_0 \mathbf{1}$, only allowing for a rescaling of the Miura-Ori boundary lengths $l \mathbf{1}$.  This is a choice based on {trial and error} in our numerical investigation.  Because we only control the top vertices of the origami structure, additional freedom to vary the boundary lengths can lead to distorted aspect ratios and origami that is not conducive to manufacturability or deployability considerations.  An example to this effect is highlighted in Fig.~\ref{fig:ap-distort}.

Note, unlike shape optimization, we allow the boundary lengths $\boldsymbol{l}_0$ to freely vary in point optimization.  This additional freedom does not generally pose the same kind of aspect ratio and  manufacturability issues observed in the shape optimization.  A basic heuristic for why is that the origami surface optimized for shape is already a decent candidate for point optimality.  Thus, $|(\boldsymbol{\alpha}_0^{\star \star}, \boldsymbol{l}_0^{\star \star}, \omega^{\star \star}) -  ( \boldsymbol{\alpha}_0^{\star},  l_0^{\star} \boldsymbol{1}, \omega^{\star})|$ is typically small. 

Finally, in the calculations of shape operators and in measuring the quality of approximation, we have non-dimensionalized length quantities by $\langle L \rangle$ ``the average length of the quad-mesh edges of the targeted surface".  To be clear, these quads have vertices  $\{ \bar{\mathbf{r}}^{0,0}, \bar{\mathbf{r}}^{1,0}, \bar{\mathbf{r}}^{1,2}, \bar{\mathbf{r}}^{0,2}\}$,  $\{ \bar{\mathbf{r}}^{0,1}, \bar{\mathbf{r}}^{1,1}, \bar{\mathbf{r}}^{1,3}, \bar{\mathbf{r}}^{0,3} \}$, $\ldots$, etc.  So  $\langle L \rangle$ averages the side lengths of all quads defined in this fashion.

\section{Observations for input origami to the optimization}
\label{sect:ap-observations}
\subsection{Origami patterns for approximating basic surfaces}
To approximate the target surface accurately, it is efficient to choose the initial input origami close to the target surface. Here we provide some experimental results for approximating four types of basic surfaces: planar, vertical bending, horizontal bending and twisting surfaces, as the guidance for selecting the initial input.
We follow the notations in \ref{sect:ap-marching} and apply the marching algorithm to generate $2M\times 2N$ RFFQM patterns from the boundary data. The folding topology on the boundary here is the same as Miura-Ori, i.e., $\boldsymbol{\sigma} = \{ +, \ldots, +\}$, while the effect of changing folding topologies is discussed later. 
We use two parameters $\alpha\in(0,\pi)$ and $\beta\in(0,\pi)$ to represent the input sector angles, while excluding the degenerate case $\alpha=\beta=\pi/2$ that does not follow the kinematics in the framework of the marching algorithm.

\begin{figure}[!t]
\centering
\includegraphics[width=0.9\linewidth]{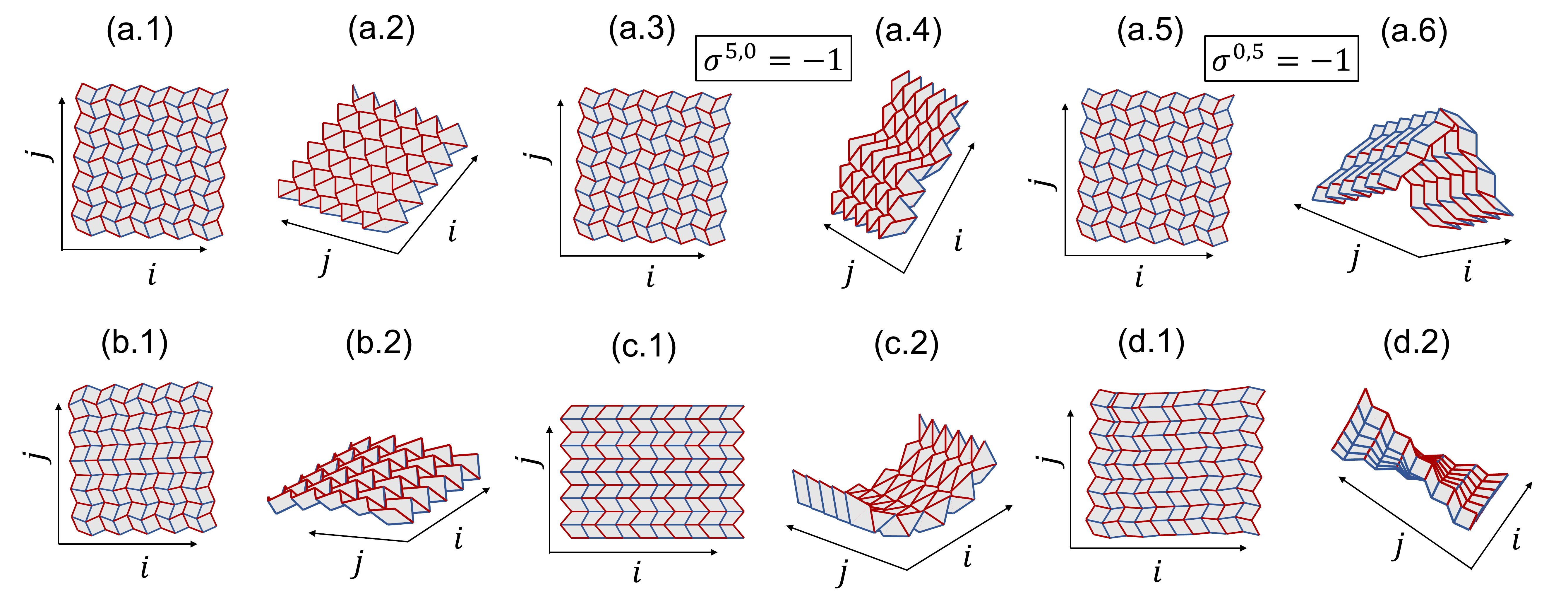}	
\caption{{ The crease patterns and folded configurations ($\omega=3\pi/4$) of origami for approximating basic deformations. The red and blue solid lines indicate the M-V assignments.} ({a}1-{a}2) Planar origami $P^{\rm{pl}}_{10,10}(\pi/2,3\pi/4)$ and (a3-a6) sharp interfaces emerging from changing of folding topologies. (b) Vertical bending origami $P^{\rm{vb}}_{10,10}(\pi/2,3\pi/4)$. (c) Horizontal bending origami $P^{\rm{hb}}_{10,10}(11\pi/36,25\pi/36,\pi/18)$. (d) Twisting origami $P^{\rm{ts}}_{10,10}(7\pi/18,13\pi/18)$}. 
\label{fig:ap-deformation}
\end{figure}

\begin{enumerate}
\item Planar origami $P_{2M,2N}^{\rm{pl}}(\alpha,\beta)$: 
\begin{equation}
\begin{aligned} \label{eq:pl}
&\alpha^{0,0}= \alpha^{2,0} = \ldots = \alpha, && \beta^{0,0} = \beta^{2,0}  = \ldots =\beta, \\
&\alpha^{1,0}=  \alpha^{3,0} = \ldots =  \pi -\beta, && \beta^{1,0}= \beta^{3,0} =\ldots= \pi -\alpha,\\
&\alpha^{0,2}= \alpha^{0,4} =\ldots = \alpha, && \beta^{0,2}= \beta^{0,4} = \ldots = \beta,\\
&\alpha^{0,1}= \alpha^{0,3} = \ldots = \beta, && \beta^{0,1} = \beta^{0,3} = \ldots =\alpha.
\end{aligned}
\end{equation}
Note, the surface remains planar during the folding for this case  (see Figs.~\ref{fig:ap-deformation}{(a.1) and (a.2)}).  Note also, these planar origami degenerate to the Miura-Ori $P_{2M,2N}^{\rm{mu}}(\alpha,\beta)$ when $\alpha+\beta=\pi$.

\item Vertical bending origami $P_{2M,2N}^{\rm{vb}}(\alpha,\beta)$:
\begin{equation}
\begin{aligned} \label{eq:vb}
&\alpha^{0,0}= \alpha^{2,0} = \ldots = \alpha, && \beta^{0,0} = \beta^{2,0}  = \ldots =\beta, \\
&\alpha^{1,0}=  \alpha^{3,0} = \ldots =  \pi -\beta, && \beta^{1,0}= \beta^{3,0} =\ldots= \pi -\alpha,\\
&\alpha^{0,2j}=\frac{N-j}{N}\alpha+\frac{j}{N}\left(\pi-\beta\right), && \beta^{0,2j}=\frac{N-j}{N}\beta+\frac{j}{N}\left(\pi -\alpha\right),\\
&\alpha^{0,2j-1}=\frac{N-j}{N-1}\beta+\frac{j-1}{N-1}\left(\pi -\alpha\right), && \beta^{0,2j - 1}=\frac{N-j}{N-1}\alpha+\frac{j-1}{N-1}\left(\pi -\beta\right).
\end{aligned}
\end{equation}
for $j = 1,\ldots, N$. The top and bottom surfaces will bend in $j$ direction but keep $i$ direction straight during the folding for this case. (see Fig.~\ref{fig:ap-deformation}{(b)}).

\item Horizontal bending origami $P_{2M,2N}^{\rm{hb}}(\alpha,\beta,\gamma)$:
\begin{equation}
\begin{aligned}  \label{eq:hb}
&\alpha^{0,0}= \alpha^{2,0} = \ldots = \alpha, && \beta^{0,0} = \beta^{2,0}  = \ldots =\beta, \\
&\alpha^{1,0}=  \alpha^{3,0} = \ldots = \alpha + \gamma, && \beta^{1,0}= \beta^{3,0} =\ldots= \beta - \gamma,\\
&\alpha^{0,2}= \alpha^{0,4} =\ldots = \alpha, && \beta^{0,2}= \beta^{0,4} = \ldots = \beta,\\
&\alpha^{0,1}= \alpha^{0,3} = \ldots = \beta, && \beta^{0,1} = \beta^{0,3} = \ldots =\alpha.
\end{aligned}
\end{equation}
for $\alpha+\beta= \pi$, $\gamma\in(0,\beta)$.  The surface will bend in $i$ direction but keep $j$ direction straight during the folding for this case (see Fig.~\ref{fig:ap-deformation}{(c)}) .  This observation can also be found in  \cite{Wang2016Folding}.

\item Twisting origami $P_{2M,2N}^{\rm{ts}}(\alpha,\beta)$:
\begin{equation}
\begin{aligned} \label{eq:tw}
&\alpha^{2i,0}=\frac{M-i}{M}\alpha+\frac{i}{M}\left(\pi -\beta\right), && \beta^{2i,0}=\frac{M-i}{M}\beta+\frac{i}{M}\left(\pi -\alpha\right),\\
&\alpha^{2i-1,0}=\frac{M-i}{M-1}(\pi-\beta)+\frac{i-1}{M-1}\alpha, && \beta^{2i-1,0}=\frac{M-i}{M-1}(\pi -\alpha)+\frac{i-1}{N-1}\beta,\\
&\alpha^{0,0} = \alpha^{0,2}= \alpha^{0,4} =\ldots = \alpha, && \beta^{0,0} =  \beta^{0,2}= \beta^{0,4} = \ldots = \beta,\\
&\alpha^{0,1}= \alpha^{0,3} = \ldots = \beta, && \beta^{0,1} = \beta^{0,3} = \ldots =\alpha.
\end{aligned}
\end{equation}
for $i = 1,\ldots, M$. The surface will exhibit a twisting motion during the folding for this case (see Fig.~\ref{fig:ap-deformation}(d)) .
\end{enumerate}

These basic origami patterns are perturbations of the Miura-Ori that can help select initial inputs intuitively for approximating different target surfaces. However, we observe experimentally that the Miura-Ori as an input is adequate for approximating target surfaces with slightly changing curvatures. 
In our results, we have accurate approximations of most surfaces while taking the Miura-Ori as the initial origami, except for the human face case, in which we use the $P^{\rm{vb}}$ origami, and the sharp-interface cases, in which we use the $P^{\rm{pl}}$ origami.

\begin{figure}[t!]
\centering
\includegraphics[width=0.9\linewidth]{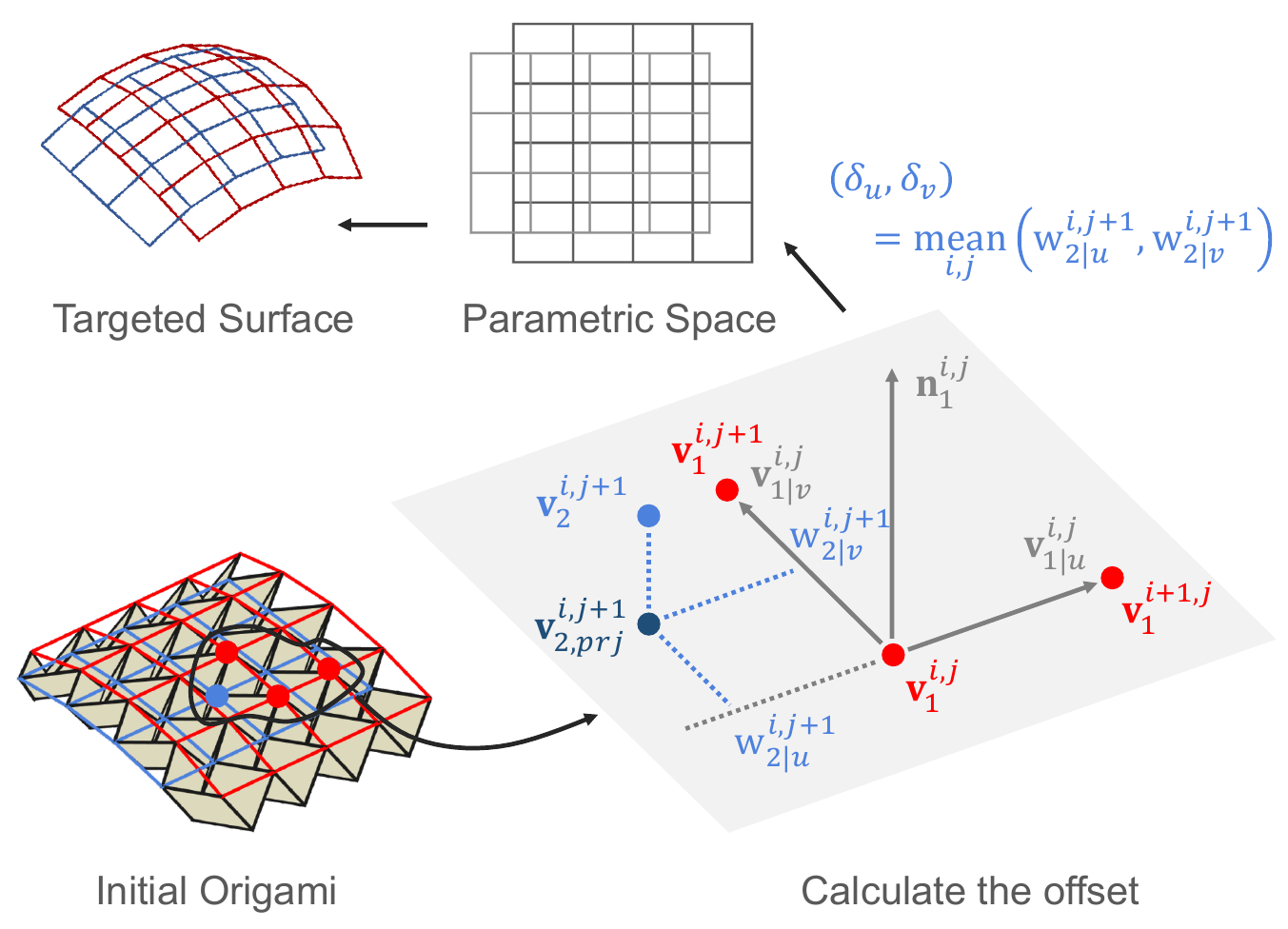}	
\caption{Discretization of the parametric surface. The offset $(\delta_u,\delta_v)$ is selected by the average of local coordinates of $\bfv_{2, prj}^{i, j+1}$, which is the projected point of $\bfv_2^{i,j+1}$ onto the $(\bfv_{1|u}^{i,j}, \bfv_{1|v}^{i,j+1})$ plane. }
\label{fig:ap-offset}
\end{figure}

\subsection{Selecting the offset} If the input origami to the optimization is a curved surface, we still require a discretization of the targeted surface.  In these cases, we again employ an offset rectangular lattice to discretize the surface ---  but with a reasoned comparison to the origami surface by averaging.  Here, we describe an approach to determine the offset, as illustrated in Fig.~\ref{fig:ap-offset}.
We employ the local basis for the red points on the origami surface:
\begin{align}
&\bfv_{1|u}^{i,j}=\bfv_1^{i+1,j}-\bfv_1^{i,j}, \quad \bfv_{1|v}^{i,j}=\bfv_1^{i,j+1}-\bfv_1^{i,j},\quad \bfv_{1|v}^{i,N}=\bfv_{1|v}^{i,N-1}, \quad \bfn_1^{i,j}=\frac{\bfv_{1|u}^{i,j}\times\bfv_{1|v}^{i,j}}{|\bfv_{1|u}^{i,j}\times\bfv_{1|v}^{i,j}|}.
\end{align}
Now consider the adjacent blue vertex $\bfv_2^{i,j+1}$ shown. This vertex is displaced from $\mathbf{v}_1^{i,j}$ by a vector $w_{2|u}^{i,j+1} \bfv_{1|u}^{i,j} + w_{2|v}^{i,j+1}  \bfv_{1|v}^{i,j} +w_{2|n}^{i,j+1}  \bfn_1^{i,j} $ with the components given by 
\beq
(w_{2|u}^{i,j+1},w_{2|v}^{i,j+1},w_{2|n}^{i,j+1}):=[{\bfv}_{1|u}^{i,j}, {\bfv}_{1|v}^{i,j}, {\bfn_1^{i,j}}]^{-1} \ (\bfv_2^{i,j+1}-\bfv^{i,j}_1).
\eeq
Note, the normal component $w_{2|n}^{i,j+1}$ vanishes for a Miura origami. To some extent, $(w_{2|u}^{i,j+1},w_{2|v}^{i,j+1})$ represents the misfit between two quad meshes along the tangent directions. So we take the average local coordinates $\mathop{\rm{mean}}\limits_{i,j}(w_{2|u}^{i,j+1},w_{2|v}^{i,j+1})$ as the offset, denoted by $(\delta_u,\delta_v)$.  We also choose the width of our rectangular unit cell as $W = \mathop{\rm{mean}}\limits_{i,j} |\mathbf{v}_{1|u}^{i,j}|$ and the length as $L = \mathop{\rm{mean}}\limits_{i,j} |\mathbf{v}_{1|v}^{i,j}|$. Given $W,L,\delta_u, \delta_v$, it is possible to construct the offset rectangular lattice and use it to discretize the targeted surface. 

\begin{figure}[!b]
\centering
\includegraphics[width=0.9\linewidth]{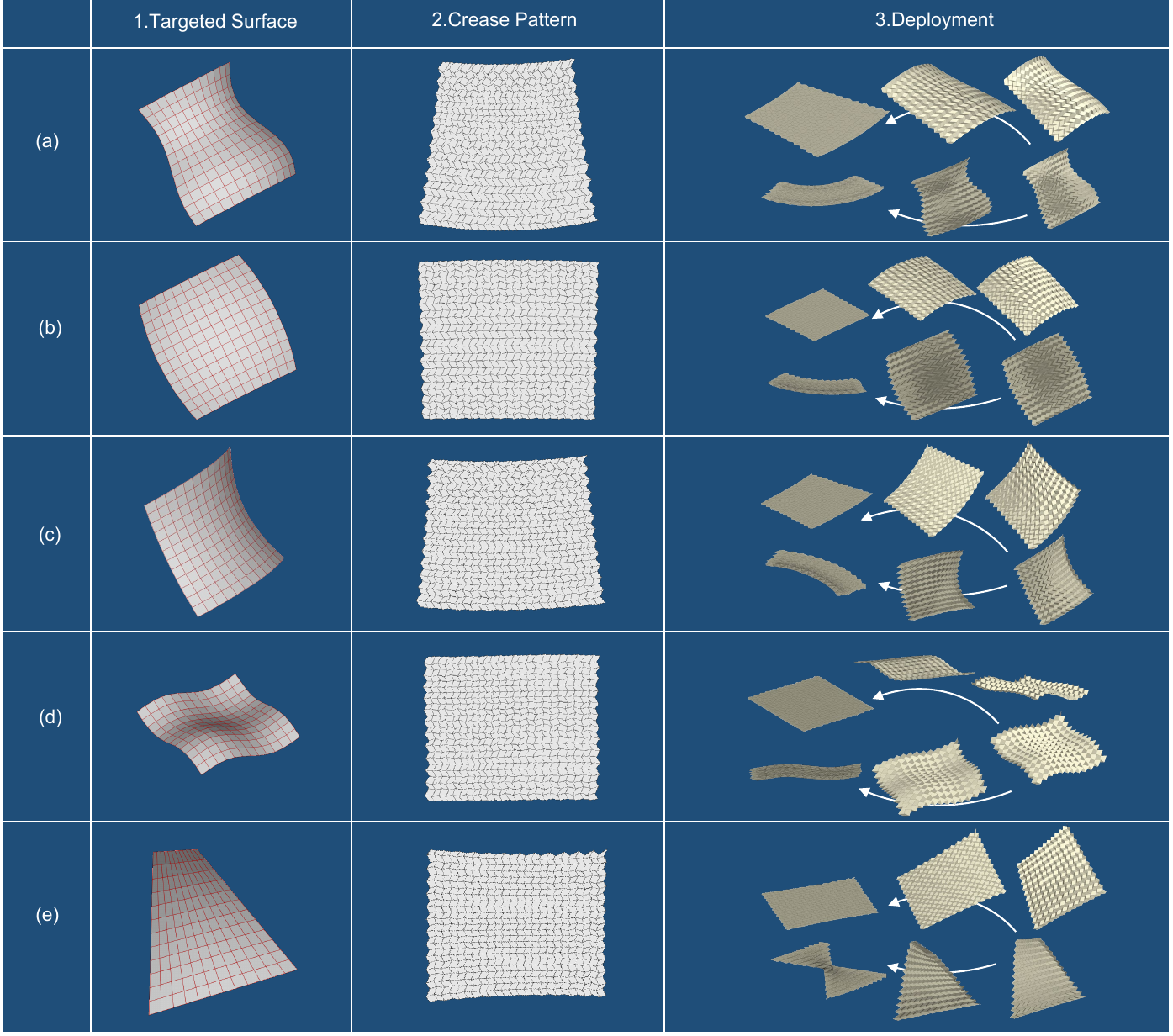}
\end{figure}
\begin{figure}[!t]
\centering
\includegraphics[width=0.9\linewidth]{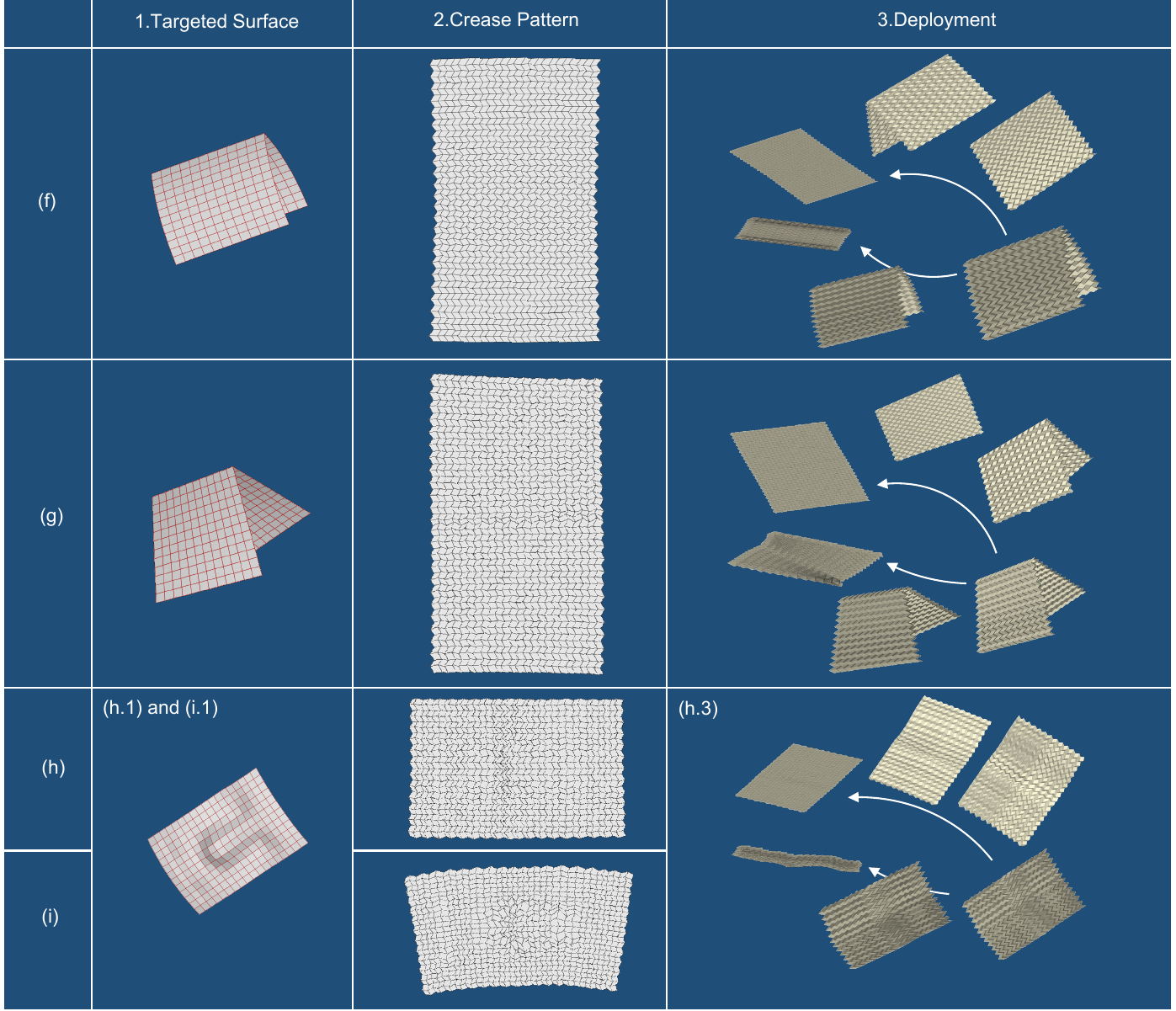}
\caption{The targeted surfaces, reference crease patterns, and deployment processes for the numerical cases: (a) A quarter vase, (b) spherical cap, (c) hyperboloid, (d) 2D sinusoid, (e) saddle, (f) connecting cylinders, (g) connecting saddles, and (h-i) the human-face surface. For cases (h) and (i), we refer to the deployable (h) and non-deployable (i) origami in Fig.~\ref{fig:face}.}
\label{fig:ap-examples}
\end{figure}

\subsection{Change of folding topology} 
\label{sect:ap-topology}
Changing folding topologies from $\sigma = +$ to $\sigma = -$ on the boundary will lead to unusual folding behaviors by the marching algorithm. In fact, sharp ridges of origami will emerge, as depicted in Figs.~\ref{fig:ap-deformation}{(a.3-a.6)}. By observation, the change of folding topology on the left or bottom boundary has different effects on the manners of folding. Specifically, by changing $\sigma=+$ to $\sigma = -$ on the bottom boundary in Figs.~\ref{fig:ap-deformation}{(a.3) and (a.4)}, we observe that a ``step" in $i$ direction emerges in the folded state. Differently, by changing $\sigma=+$ to $\sigma = -$ on the left boundary in Figs.~\ref{fig:ap-deformation}{(a.5) and (a.6)}, a "V" shape changing in $j$ direction  is observed in the folded state. These two observations inspire us to approximate target surfaces with sharp interfaces. For example, our results in Figs.~\ref{fig:examples}{(f) and (g)} exploit the change of folding topology on the left boundary and result in good approximations.

\section{Approximation results}
In Fig.~\ref{fig:ap-examples}, we provide the optimal reference and deformed crease patterns for all the approximation cases, as well as their deployments.    

\section{Computational resources}
\label{sect:ap-resources}
We use the function fmincon in Matlab (R2019b) Optimization Toolbox to perform the sequence quadratic program (SQP) algorithm in both Shape and Point Optimization.
{ The gradients and Hessians used in the SQP solver are computed by the forward difference scheme.}
We use the function pcregistercpd in Matlab (R2019b) Computer Vision Toolbox to perform the coherent point drift (CPD) algorithm in the registration step. The RFFQM cases we provide in { Figs.~\ref{fig:examples} (a-g)} and Fig.~\ref{fig:face}(b) are computed on a laptop with the Intel(R) Core(TM) i7-9750H CPU (single-thread serial). The human-face case in Fig.~\ref{fig:face}(c) is computed on the High-performance Computing (HPC) Platform of Peking University (32-thread parallel).  

\end{appendix}


\begin{thebibliography}{10}

\bibitem{lang2011origami}
Lang, R.J., 2011. {Origami Design Secrets: Mathematical Methods for an Ancient Art}. AK Peters/CRC Press.

\bibitem{huffman1976curvature}
Huffman, D.A., 1976. Curvature and creases: A primer on paper. {{IEEE Trans. Comput.}} {C-25} (10), 1010--1019.

\bibitem{miura1985method}
Miura, K., 1985. Method of packaging and deployment of large membranes in space.
{{Inst. Space Astronaut. Sci. Rep.}} {618}, 1.

\bibitem{kawasaki1991relation}
Kawasaki, T., 1989. On the relation between mountain-creases and valley-creases of a flat origami, in: Huzita, H.  (Ed.), {Proceedings of the 1st International Meeting of Origami Science and Technology}. 
Universita di Padova, pp. 229--237.

\bibitem{hull1994mathematics}
Hull, T., 1994. On the mathematics of flat origamis. {{Congr. Numerantium}} {100}, 215--224.

\bibitem{tachi2009generalization}
Tachi, T., 2009. Generalization of rigid-foldable quadrilateral-mesh origami.
{{J. Int. Assoc. Shell Spatial Struct.}} {50} (3), 173--179.

\bibitem{filipov2015origami}
Filipov, E.T., Tachi, T., Paulino, G.H., 2015. Origami tubes assembled into stiff, yet	reconfigurable structures and metamaterials. {{Proc. Natl. Acad. Sci. U.S.A.}} {112} (40), 12321--12326.

\bibitem{CALLENS2018241}
Callens, S.J., Zadpoor, A.A., 2018. From flat sheets to curved geometries: Origami and	kirigami approaches. {{Mater. Today}} {21} (3), 241--264.

\bibitem{li2019Origami}
Li, Y., You, Z., 2019. Origami concave tubes for energy absorption. {Int. J. Solids Struct.} {169} (2019) 21--40.

\bibitem{gu2020Origami}
Gu, Y., Chen, Y., 2020. Origami cubes with one-DOF rigid and flat foldability. {Int. J. Solids Struct.} {207} (2020), 250--261.


\bibitem{kuribayashi2006self}
Kuribayashi, K., Tsuchiya, K., You, Z., Tomus, D., Umemoto, M., Ito, T., Sasaki., M., 2006. Self-deployable origami stent grafts as a biomedical application of Ni-rich TiNi shape memory alloy foil. {{Mater. Sci. Eng. A}} {419} (1-2), 131--137.

\bibitem{zirbel2013accommodating}
Zirbel, S.A., Lang, R.J., Thomson, M.W., Sigel, D.A., Walkemeyer, P.E., Trease, B.P., Magleby, S.P., Howell., L.L., 2013. Accommodating thickness in origami-based deployable arrays. {{J. Mech. Des.}} {135} (11), 111005.

\bibitem{pellegrino2014deployable}
Pellegrino, S., 2014. {Deployable Structures}. Springer.

\bibitem{felton2014method}
Felton, S., Tolley, M., Demaine, E., Rus, D., Wood, R., 2014. A method for building
self-folding machines. {{Science}} {345} (6197), 644--646.

\bibitem{kim2018printing}
Kim, Y., Yuk, H., Zhao, R., Chester, S.A., Zhao, X., 2018. Printing ferromagnetic domains for	untethered fast-transforming soft materials. {{Nature}} {558} (7709), 274--279.

\bibitem{rogers2016origami}
Rogers, J., Huang, Y., Schmidt, O.G., Gracias, D.H., 2016. Origami MEMS and NEMS.
{{Mrs Bull.}} {41} (2), 123--129.

\bibitem{Gattas2013Miura}
Gattas, J.M., Wu, W., You, Z., 2013. {Miura-base rigid origami: Parameterizations of first-level derivative and piecewise geometries}. {{J. Mech. Des.}} {135} (11), 111011.

\bibitem{Sareh2015Designofisomorphic}
Sareh, P., Guest, S., 2015. Design of isomorphic symmetric descendants of the Miura-ori. {{Smart Mater. Struct.}}	{24}, 085001.

\bibitem{Sareh2015Designofnon}
Sareh, P., Guest, S., 2015. Design of non-isomorphic symmetric descendants of the
Miura-ori. {{Smart Mater. Struct.}}	{24}, 085002.

\bibitem{Hu2019Design}
Hu, Y., Liang, H., Duan, H., 2019. {Design of cylindrical and axisymmetric origami
structures based on generalized Miura-ori cell}. {{J. Mech. Robotics}} {11} (5), 051004.

\bibitem{feng2020helical}
Feng, F., Plucinsky, P., James, R.D., 2020. Helical Miura origami.
{{Phys. Rev. E}} {101} (3), 033002.

\bibitem{mcinerney2020hidden}
McInerney, J., Chen, B.G., Theran, L., Santangelo, C., Rocklin, Z., 2020. Hidden symmetries generate rigid folding mechanisms in periodic origami. {{Proc. Natl. Acad. Sci. U.S.A.}} 117 (48) 30252--30259.

\bibitem{lang1996computational}
Lang, R.J., 1996. A computational algorithm for origami design, in: ACM Special Interest Group for Algorithms and Computation Theory  (Ed.), {Proceedings of
the 12th Annual Symposium on Computational Geometry}.  Association for Computing Machinery,
pp. 98--105.

\bibitem{demaine2017origamizer}
Demaine, E.D., Tachi, T., 2017. Origamizer: A practical algorithm for folding any
polyhedron, in: Aronov, B., Katz, M.J.  (Eds.), {Proceedings of the 33rd International Symposium on Computational Geometry}. 
Schloss Dagstuhl-Leibniz-Zentrum fuer Informatik, pp. 34:1--34:16.

\bibitem{tachi2010freeform}
Tachi, T., 2010. Freeform variations of origami. {{J. Geom. Graph.}}	{14} (2), 203--215.

\bibitem{Levi2016Programming}
Dudte, L., Vouga, E., Tachi, T., Mahadevan, L., 2016. Programming curvature using origami	tessellations. {{Nat. Mater.}} {15}, 583--588.

\bibitem{pratapa2019geometric}
Pratapa, P.P., Liu, K., Paulino, G.H., 2019. Geometric mechanics of origami patterns exhibiting Poisson’s ratio switch by breaking mountain and valley
assignment.	{{Phys. Rev. Lett.}} {122} (15), 155501.

\bibitem{dudte2020additive}
{ Dudte, L.H., Choi, G., Mahadevan, L., 2021. An additive algorithm for origami design. {{Proc. Natl. Acad. Sci. U.S.A.}} {118} (21), e2019241118.}

\bibitem{Hu2020Rigid}
Hu, Y., Zhou, Y., Liang, H., 2020. Constructing rigid-foldable generalized Miura-ori tessellations for curved surfaces. J. Mechanisms Robotics. 13 (1), 011017.

\bibitem{dieleman2020jigsaw}
Dieleman, P., Vasmel, N., Waitukaitis, S., van Hecke, M., 2020. Jigsaw puzzle design of pluripotent origami. {{Nat. Phys.}} {16} (1), 63--68.

\bibitem{hayakawa2020Form}
{ Hayakawa, K., Ohsaki, M., 2020. Form generation of rigid origami for approximation of a curved surface based on mechanical property of partially rigid frames. {Int. J. Solids Struct.}, 216 (2021), 182--199.}

\bibitem{schenk2013geometry}
Schenk, M., Guest, S.D., 2013. Geometry of Miura-folded metamaterials.
{{Proc. Natl. Acad. Sci. U.S.A.}} {110} (9), 3276--3281.

\bibitem{wei2013geometric}
Wei, Z.Y., Guo, Z.V., Dudte, L., Liang, H.Y., Mahadevan, L., 2013. Geometric mechanics of periodic pleated origami. {{Phys. Rev. Lett.}} {110} (21), 215501.

\bibitem{silverberg2014using}
Silverberg, J.L., Evans, A.A., McLeod, L., Hayward, R.C., Hull, T., Santangelo, C.D., Cohen, I., 2014. Using origami design principles to fold reprogrammable
mechanical metamaterials.
{{Science}} {345} (6197), 647--650.

\bibitem{na2015programming}Na, J.H., Evans, A.A., Bae, J., Chiappelli, M.C., Santangelo, C.D., Lang, R.J., Hull, T.C., Hayward, R.C., 2015. Programming reversibly self-folding origami with micropatterned	photo-crosslinkable polymer trilayers.
{{Adv. Mater.}} {27} (1), 79--85.

\bibitem{lang2018rigidly}
Lang, R.J., Howell, L., 2018. Rigidly foldable quadrilateral meshes from angle arrays.	{{J. Mech. Robotics}} {10} (2), 021004.

\bibitem{feng2020designs}
Feng, F., Dang, X., James, R.D., Plucinsky, P., 2020. The designs and deformations of rigidly and flat-foldable quadrilateral mesh origami.
{{J. Mech. Phys. Solids}} {142}, 104018.

\bibitem{ball1989fine}
Ball, J.M., James, R.D., 1989. Fine phase mixtures as minimizers of energy, in: Antman, S.S., Brezis, H., Coleman, B.D., Feinberg, M., Nohel, J.A., Ziemer, W.P.  (Eds.), {Analysis and Continuum Mechanics}.
Springer, pp. 647--686.

\bibitem{bhattacharya2003microstructure}
Bhattacharya, K., 2003. {Microstructure of Martensite: Why It Forms and	How It Gives Rise to the Shape-Memory Effect}.
Oxford University Press.

\bibitem{song2013enhanced}
Song, Y., Chen, X., Dabade, V., Shield, T.W., James, R.D., 2013. Enhanced reversibility and unusual microstructure of a phase-transforming material.
{{Nature}} {502} (7469), 85--88.

\bibitem{Grinspun2006Computing}
Grinspun, E., Gingold, Y., Reisman, J., Zorin, D., 2006. Computing discrete shape operators on general meshes.
{{Comput. Graph. Forum}} {25},
547--556.

\bibitem{Rees2018Mechanics}
Van Rees, W., Matsumoto, E., Gladman, A., Lewis, J., Mahadevan, L., 2018. Mechanics of biomimetic 4D printed structures.
{{Soft Matter}} {14}, 8771--8779.

\bibitem{Myronenko5432191}
Myronenko, A., Song, X., 2010. Point set registration: Coherent point drift.
{{IEEE Trans. Pattern Anal. Mach. Intell.}} {32} (12), 2262--2275.

\bibitem{klein2007shaping}
Klein, Y., Efrati, E., Sharon, E., 2007. Shaping of elastic sheets by prescription of non-Euclidean metrics.
{{Science}} {315} (5815), 1116--1120.

\bibitem{dias2011programmed}
Dias, M.A., Hanna, J.A., Santangelo, C.D., 2011. Programmed buckling by controlled lateral	swelling in a thin elastic sheet.
{{Phys. Rev. E}} {84} (3), 036603.

\bibitem{plucinsky2016programming}
Plucinsky, P., Lemm, M., Bhattacharya, K., 2016. Programming complex shapes in thin nematic elastomer and glass sheets.
{{Phys. Rev. E}} {94} (1), 010701.

\bibitem{van2017growth}
Van Rees, W., Vouga, E., Mahadevan, L., 2017. Growth patterns for shape-shifting elastic	bilayers.
{{Proc. Natl. Acad. Sci. U.S.A.}} {114} (44), 11597--11602.

\bibitem{aharoni2018universal}
Aharoni, H., Xia, Y., Zhang, X., Kamien, R.D., Yang, S., 2018. Universal inverse design of surfaces with thin nematic elastomer sheets.
{{Proc. Natl. Acad. Sci. U.S.A.}} {115} (28), 7206--7211.

\bibitem{griniasty2019curved}
Griniasty, I., Aharoni, H., Efrati, E., 2019. Curved geometries from planar director fields: Solving the two-dimensional inverse problem.
{{Phys. Rev. Lett.}} {123} (12), 127801.

\bibitem{conti2008confining}
Conti, S., Maggi, F., 2008. Confining thin elastic sheets and folding paper.
{{Arch. Ration. Mech. Anal.}} {187} (1), 1--48.

\bibitem{Thomas2018Morphable}
Gerig, T., Morel-Forster, A., Blumer, C., Egger, B., Luthi, M., Schönborn, S., Vetter, T., 2018. Morphable face models - An open framework, in: Bilof, R.  (Ed.), {Proceedings of the 13th IEEE International Conference on Automatic Face and Gesture Recognition}.
IEEE Computer Society, Conference Publishing Services, pp. 75--82.

\bibitem{Liu2021Origami}	
{ Liu, H., Plucinsky, P., Feng, F., James, R.D., 2021. Origami and materials science. {Phil. Trans. R. Soc. A.} 379, 20200113.}


\bibitem{izmestiev2017classification}
Izmestiev, I., 2017. Classification of flexible kokotsakis polyhedra with quadrangular base. {{Int. Math. Res. Notices}} {2017} (3), 715--808.

\bibitem{Wang2016Folding}
Wang, F., Gong, H., Chen, X., Chen, C., 2016. Folding to curved surfaces: A generalized	design method and mechanics of origami-based cylindrical structures.
{Sci. Rep.} {6}, 33312.

\end{thebibliography}
\end{document}